\newcommand{\ignore}[1] { }
\newcommand{\sk}{s}
\newcommand{\fm}{{\mathcal Fm}}
\newcommand{\nmvla}
{\mbox{\boldmath$N \hspace{-0.12 em}
M \hspace{-0.08 em} V \hspace{-0.16 em} L_{\sbA}$}}
\newcommand{\nmvladd}
{\mbox{\boldmath$N \hspace{-0.12 em}
M \hspace{-0.08 em} V \hspace{-0.16 em}
L_{\sbA \sbD \hspace{-0.05 em} \sbD}$}}
\newcommand{\nmvlr}
{\mbox{\boldmath$N \hspace{-0.12 em}
M \hspace{-0.08 em} V \hspace{-0.16 em} L_{\sbR}$}}
\newcommand{\nmvlrdd}
{\mbox{\boldmath$N \hspace{-0.12 em}
M \hspace{-0.08 em} V \hspace{-0.16 em}
L_{\sbR \sbD \hspace{-0.05 em} \sbD}$}}
\newcommand{\nmvlrsd}
{\mbox{\boldmath$N \hspace{-0.12 em}
M \hspace{-0.08 em} V \hspace{-0.16 em}
L_{\sbR \sbS \sbD}$}}
\newcommand{\nmvlrddsd}
{\mbox{\boldmath$N \hspace{-0.12 em}
M \hspace{-0.08 em} V \hspace{-0.16 em}
L_{\sbR \sbD \sbD \sbS \sbD}$}}
\newcommand{\bgamma}{\mbox{\boldmath$\Gamma$}}
\newcommand{\bsigma}{\mbox{\boldmath$\Sigma$}}
\newcommand{\btm}[1]{\mbox{\boldmath \tiny$#1$}}
\newcommand{\bsm}[1]{\mbox{\boldmath \scriptsize$#1$}}
\newcommand{\sbA}{\bsm A}
\newcommand{\sbD}{\bsm D}
\newcommand{\sbR}{\bsm R}
\newcommand{\sbS}{\bsm S}
\newcommand{\sbnmvla}
{{\bsm N} \hspace{-0.19 em}
{\bsm M} \hspace{-0.2 em} {\bsm V} \hspace{-0.2 em} {\bsm L}_{\btm A}}
\newcommand{\sbnmvladd}
{{\bsm N} \hspace{-0.19 em}
{\bsm M} \hspace{-0.2 em} {\bsm V} \hspace{-0.2 em}
{\bsm L}_{{\btm A} {\btm D \hspace{-0.05 em}} {\btm D} }}
\newcommand{\sbnmvlr}
{{\bsm N} \hspace{-0.19 em}
{\bsm M} \hspace{-0.2 em} {\bsm V} \hspace{-0.2 em} {\bsm L}_{\btm R}}
\newcommand{\sbnmvlrdd}
{{\bsm N} \hspace{-0.19 em}
{\bsm M} \hspace{-0.2 em} {\bsm V} \hspace{-0.2 em}
{\bsm L}_{{\btm R} {\btm D} \hspace{-0.05 em} {\btm D} }}
\renewcommand{\psi}{\varphi}
\begin{document}

\setcounter{page}{143}
\publyear{22}
\papernumber{2123}
\volume{186}
\issue{1-4}

      \finalVersionForARXIV

\title{A Note on Calculi for Non-deterministic Many-valued Logics}

\author{Michael Kaminski\thanks{Address of correspondence: Department of Computer Science,
                                Technion -- Israel Institute of Technology, Haifa 3200003, Israel}
   \\
Department of Computer Science\\
Technion -- Israel Institute of Technology\\
Haifa 3200003, Israel\\
kaminski@as.technion.ac.il
}

\runninghead{M. Kaminski}{A Note on Calculi for Non-deterministic Many-valued Logics}

\maketitle

\begin{abstract}
We present two deductively equivalent calculi for
non-deterministic many-valued logics.
One is defined by axioms and the other~-- by rules of inference.
The two calculi are obtained from the truth tables of
the logic under consideration in a straightforward manner.
We prove soundness and strong completeness theorems for both calculi and
also prove the cut elimination theorem for
the calculi defined by rules of inference.
\end{abstract}

\section{Introduction}
\label{s: introduction}

Non-deterministic many-valued logics \cite{AvronK05,AvronL05,AvronZ11}
are a generalization of ``ordinary'' many-valued logics and,
in this note, we extend two of the ``deterministic'' calculi introduced
in~\cite{KaminskiF21} to non-deterministic ones.
Like in~\cite{KaminskiF21},
the logics under considerations are presented semantically,
based on the connectives' truth tables.
The non-deterministic semantics of an $\ell$-ary connective $\ast$
is given by the connective truth table
that is a function from the set of truth values
$V = \{ v_1,\ldots,v_n \}$, $n \geq 2$, into
the set of the non-empty subsets of $V$:
$\ast : V^\ell \rightarrow P(V) \setminus \{ \emptyset \}$.

Similarly to~\cite{KaminskiF21}, we construct proof systems
for non-deterministic many-valued logics out of the truth tables for
the connectives,
cf.~\cite{BaazFZ93,FrancezK19,HanazawaT86,Roussenau67,Roussenau70,
Takahashi68,Takahashi70}.
Our construction is general, transparent, and uniform.

This note is organized as follows.
In Section~\ref{s: nmvla} we introduce
the many-valued logic $\nmvla$\footnote{
The subscript ``$A$'' indicates the axiom description of the logic.}
that is based on
an axiomatic approach and
prove the strong soundness and completeness
(i.e.,  with respect to the consequence relation)
theorem for that logic.
In Section~\ref{s: nmvlr}
we introduce the logic $\nmvlr$ by,
equivalently, replacing some axioms of $\nmvla$ with rules of inference
and
prove the cut elimination theorem.
We conclude the paper with the appendix containing a list of calculi
dual to $\nmvla$ and $\nmvlr$.
The proofs of the properties of these dual calculi are
very similar to their counterparts in~\cite{KaminskiF21} and
are omitted.

\section{Translating truth tables to axioms}
\label{s: nmvla}

The semantics of non-deterministic many-valued logic is as follows.

\medskip
A valuation $v$ is a function from the set of formulas $\fm$ into
the set of truth values $V = \{ v_1,\ldots,v_n \}$, $n \geq 2$,
such that for each connective $\ast$,
\[
v(\ast(\varphi_1,\ldots,\varphi_\ell))
\in
\ast(v(\varphi_1),\ldots,v(\varphi_\ell)))
\]

The logic $\nmvla$
considered in this section has
only structural rules of inference and axioms instead of
logical rules, cf.~\cite{BaazFZ93,Roussenau67}.
We use the notion of
a \textit{labelled} formula that is a pair $(\varphi , k)$,
where $\varphi$ is a formula and $k = 1,\ldots,n$,
introduced in~\cite{BaazLZ13,HanazawaT86}.
The intended meaning of such a labelled formula is that
$v_k$ is the truth value associated with $\varphi$.

Sequents are expressions of the form $\Gamma \rightarrow \Delta$,
where $\Gamma$ and $\Delta$ are finite (possibly empty) sets
of labelled formulas and
$\rightarrow$ is not a symbol of the underlying language.
As we shall see in the sequel, such sequents are more appropriate for
meta-reasoning about labelled formulas than
those from~\cite{BaazFZ93,FrancezK19,Roussenau67,Roussenau70,
Takahashi68,Takahashi70}.

\medskip
The axioms of $\nmvla$ are sequents of the form
\begin{equation}
\label{eq: ordinary axiom}
(\psi , k) \rightarrow (\psi , k)
\end{equation}
$k =1,\ldots,n$, or of the form
\begin{equation}
\label{eq: table axiom}
(\varphi_1 , k_1),\ldots,(\varphi_\ell , k_\ell)
\rightarrow
\{ (\ast(\varphi_1,\ldots,\varphi_\ell),k) :
v_k \in \ast(v_{k_1},\ldots,v_{k_\ell}) \}
\end{equation}
for each table entry $\ast(v_{k_1},\ldots,v_{k_\ell})$.
The latter axiom will be referred to as a \textit{table} axiom.

\medskip
The rules of inference of $\nmvla$ are the structural rules below.
\vspace{0.9 em}

\par \noindent
$k$-L-shift, $k = 1,\ldots,n$,
\begin{equation}
\label{eq: ls}
\frac
{\textstyle
\Gamma , (\varphi , k) \rightarrow \Delta}
{\textstyle
\Gamma
\rightarrow
\Delta , \{ \varphi \} \times \overline{\{ k \}}}
\
\footnote{
As usual,
$\overline{K}$ denotes the complement $\{ 1,\ldots,n \} \setminus K$ of
$K$.}
\end{equation}
\vspace{0.5 em}

\par \noindent
$k^\prime,k^{\prime\prime}$-R-shift,
$k^\prime,k^{\prime\prime} = 1,\ldots,n$,
$k^\prime \neq k^{\prime\prime}$,
\begin{equation}
\label{eq: rs}
\frac
{\textstyle
\Gamma \rightarrow \Delta , (\varphi , k^\prime)}
{\textstyle
\Gamma , (\varphi , k^{\prime\prime}) \rightarrow \Delta}
\end{equation}

\par \noindent
$k$-L-weakening, $k = 1,\ldots,n$,
\begin{equation}
\label{eq: lw}
\frac
{\textstyle
\Gamma \rightarrow \Delta}
{\textstyle
\Gamma , (\varphi ,k) \rightarrow \Delta}
\end{equation}

\par \noindent
$k$-R-weakening, $k = 1,\ldots,n$,
\begin{equation}
\label{eq: rw}
\frac
{\textstyle
\Gamma \rightarrow \Delta}
{\textstyle
\Gamma \rightarrow \Delta , (\varphi ,k)}
\end{equation}

\par \noindent
$k$-cut, $k = 1,\ldots,n$,
\begin{equation}
\label{eq: cut}
\frac
{\textstyle
\Gamma \rightarrow \Delta , (\varphi , k)
\ \ \ \ \ \
\Gamma , (\varphi , k) \rightarrow \Delta}
{\textstyle
\Gamma \rightarrow \Delta}
\end{equation}

\par \noindent
and
\vspace{0.9 em}

\par \noindent
$k^\prime,k^{\prime\prime}$-resolution,
$k^\prime,k^{\prime\prime} =
1,\ldots,n$, $k^\prime \neq k^{\prime\prime}$
\begin{equation}
\label{eq: resolution}
\frac
{\textstyle
\Gamma \rightarrow \Delta^\prime , (\varphi , k^\prime)
\ \ \ \ \ \
\Gamma \rightarrow \Delta^{\prime\prime} , (\varphi , k^{\prime\prime})}
{\textstyle
\Gamma \rightarrow \Delta^\prime , \Delta^{\prime\prime}}
\end{equation}

In fact, by~\cite[Proposition 3.3]{KaminskiF21},
rules~(\ref{eq: cut}) and~(\ref{eq: resolution})
are derivable from each other.

\begin{remark}
\label{r: all calculi}
The axioms~(\ref{eq: ordinary axiom}) belong to
all the calculi considered in this paper and
all the calculi in~\cite{KaminskiF21}.
Also,
the structural rules of all the calculi considered in this paper and
in~\cite{KaminskiF21}
are rules~(\ref{eq: ls})--(\ref{eq: resolution}).
Thus, when the proofs for
``deterministic'' many valued logics in~\cite{KaminskiF21} rely on
axioms~(\ref{eq: ordinary axiom}) and
rules~(\ref{eq: ls})--(\ref{eq: resolution}), only,
they apply to the non-deterministic ones as well.
\end{remark}

\begin{proposition}
\label{p: all phi}
{(\cite[Proposition 3.4]{KaminskiF21})}
The sequent
\begin{equation}
\label{eq: all phi}
\rightarrow \{ \varphi \} \times \{ 1,\ldots,n \}
\end{equation}
is derivable in $\nmvla$.
\end{proposition}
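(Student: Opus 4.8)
The plan is to obtain the sequent~(\ref{eq: all phi}) directly from a single instance of the ordinary axiom by one application of the L-shift rule; no induction, cut, or resolution will be needed. The intuition is that shifting the unique labelled formula of an ordinary axiom out of the antecedent forces all of its remaining labels into the succedent, and together with the label already present on the right this yields every label in $\{ 1,\ldots,n \}$.

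Concretely, I would start from the instance
\[
(\varphi , 1) \rightarrow (\varphi , 1)
\]
of the ordinary axiom~(\ref{eq: ordinary axiom}), obtained by taking $k = 1$ and the formula to be $\varphi$. Reading this sequent as the premise $\Gamma , (\varphi , k) \rightarrow \Delta$ of the $1$-L-shift rule~(\ref{eq: ls}), I set $\Gamma = \emptyset$, $k = 1$, and $\Delta = \{ (\varphi , 1) \}$. Applying $1$-L-shift then produces the conclusion
\[
\rightarrow \Delta , \{ \varphi \} \times \overline{\{ 1 \}} ,
\]
that is, $\rightarrow \bigl(\{ \varphi \} \times \{ 1 \}\bigr) , \bigl(\{ \varphi \} \times \{ 2,\ldots,n \}\bigr)$.

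The one point to verify is the set identity
\[
\bigl(\{ \varphi \} \times \{ 1 \}\bigr) \cup \bigl(\{ \varphi \} \times \overline{\{ 1 \}}\bigr)
= \{ \varphi \} \times \{ 1,\ldots,n \} ,
\]
which holds because $\{ 1 \} \cup \overline{\{ 1 \}} = \{ 1,\ldots,n \}$; hence the conclusion is exactly~(\ref{eq: all phi}), and the derivation in $\nmvla$ is complete. I do not anticipate any genuine obstacle here: the proof is immediate once the L-shift is applied to an ordinary axiom, and the choice $k = 1$ is arbitrary, since any $k \in \{ 1,\ldots,n \}$ would serve equally well.
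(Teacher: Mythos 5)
Your proof is correct and is essentially the paper's own argument (the one given for \cite[Proposition 3.4]{KaminskiF21}): apply a single $k$-L-shift~(\ref{eq: ls}) to the ordinary axiom $(\varphi,k)\rightarrow(\varphi,k)$, so that the succedent, being a set, becomes $\{(\varphi,k)\}\cup\bigl(\{\varphi\}\times\overline{\{k\}}\bigr)=\{\varphi\}\times\{1,\ldots,n\}$. Your verification of the rule instance ($\Gamma=\emptyset$, $\Delta=\{(\varphi,1)\}$) and of the set identity is exactly right, and no cut, resolution, or induction is needed.
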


Next, we prove the strong completeness theorem for $\nmvla$.

\begin{definition}
\label{d: satisfiability}
A valuation $v$ \textit{satisfies} a sequent $\Gamma \rightarrow \Delta$
if the following holds.
\begin{itemize}
\item
If for each $(\varphi , k) \in \Gamma$, $v(\varphi) = v_k$,
then for some $(\varphi , k) \in \Delta$, $v(\varphi) = v_k$.\footnote{
\label{f: meta}
That is, $v$ satisfies a sequent $\Gamma \rightarrow \Delta$,
if the meta-value of the classical meta-sequent
$\{ v(\varphi) = v_k : (\varphi,k) \in \Gamma \}
\rightarrow \{ v(\varphi) = v_k : (\varphi,k) \in \Delta \}$
is ``true.''}
\end{itemize}
\end{definition}

\begin{definition}
\label{d: entailment}
A set of sequents $\bsigma$ \textit{semantically entails}
a sequent $\Sigma$,
denoted $\bsigma \models \Sigma$,
if each valuation satisfying all sequents from $\bsigma$
also satisfies $\Sigma$.
\end{definition}

\begin{theorem}
\label{t: c and s of nmvla}
{(Soundness and completeness of $\nmvla$)}
Let $\bsigma$ be a set of sequents.
Then \linebreak $\bsigma \vdash_{\sbnmvla} \Gamma \rightarrow \Delta$
if and only if
$\bsigma \models \Gamma \rightarrow \Delta$.
\end{theorem}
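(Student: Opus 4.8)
The plan is to prove the two directions separately: soundness ($\vdash$ implies $\models$) by induction on derivations, and completeness ($\models$ implies $\vdash$) by contraposition, building from an underivable sequent a valuation that refutes it while satisfying every sequent of $\bsigma$.

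For soundness I would show that every axiom is satisfied by every valuation and that every rule preserves satisfaction, so that $\bsigma$-entailment propagates from premises to conclusion (the sequents of $\bsigma$ being $\bsigma$-entailed by definition). The ordinary axioms~(\ref{eq: ordinary axiom}) are trivially satisfied. For a table axiom~(\ref{eq: table axiom}), if a valuation $v$ makes its antecedent true, i.e.\ $v(\C_i)=v_{k_i}$ for all $i$, then the defining property of a valuation gives $v(\ast(\C_1,\ldots,\C_\ell))\in\ast(v_{k_1},\ldots,v_{k_\ell})$, which is exactly one of the labelled formulas on the right. The structural rules~(\ref{eq: ls})--(\ref{eq: resolution}) are handled by a routine case analysis on Definition~\ref{d: satisfiability}; the only mildly delicate ones are~(\ref{eq: rs}) and~(\ref{eq: resolution}), where one uses that $v(\C)$ can equal $v_{k'}$ for at most one index, so that when $k'\neq k''$ at least one of the shifted labelled formulas is false under $v$. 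Induction on the derivation then yields soundness.

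For completeness I would prove the contrapositive: assuming $\bsigma\not\vdash_{\sbnmvla}\Gamma\rightarrow\Delta$, I construct a valuation satisfying all of $\bsigma$ but not $\Gamma\rightarrow\Delta$. The engine is the following extension property: if $\bsigma\not\vdash\btheta\rightarrow\Delta$, then for every formula $\C$ there is an index $k$ with $\bsigma\not\vdash\btheta,(\C,k)\rightarrow\Delta$. Indeed, were $\btheta,(\C,k)\rightarrow\Delta$ derivable for every $k=1,\ldots,n$, then starting from the derivable sequent $\rightarrow\{\C\}\times\{1,\ldots,n\}$ of Proposition~\ref{p: all phi}, weakening and an $n$-fold application of $k$-cut~(\ref{eq: cut}) would produce $\btheta\rightarrow\Delta$, a contradiction. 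Using this, and the fact that a derivation is finite (so that, writing $\bsigma\vdash\btheta\rightarrow\Delta$ for an infinite $\btheta$ as an abbreviation of $\bsigma\vdash\btheta_0\rightarrow\Delta$ for some finite $\btheta_0\subseteq\btheta$, underivability is preserved under unions of chains), Zorn's lemma yields a set $\bgamma^{*}\supseteq\Gamma$ of labelled formulas that is maximal subject to $\bsigma\not\vdash\bgamma^{*}\rightarrow\Delta$. By maximality and the extension property, for each formula $\C$ some $(\C,k)\in\bgamma^{*}$; and $\bgamma^{*}$ cannot contain two labels of one formula, since~(\ref{eq: rs}) applied to an instance of~(\ref{eq: ordinary axiom}) derives $(\C,k'),(\C,k'')\rightarrow$ for $k'\neq k''$, whence weakening would give $\bgamma^{*}\rightarrow\Delta$. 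Hence $v(\C)=v_k$ for the unique $k$ with $(\C,k)\in\bgamma^{*}$ defines a total function on $\fm$.

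It then remains to verify three things. First, $v$ is a genuine valuation: fixing $v(\C_i)=v_{k_i}$ and $v(\ast(\C_1,\ldots,\C_\ell))=v_k$, take the table axiom~(\ref{eq: table axiom}) for $\ast(v_{k_1},\ldots,v_{k_\ell})$ and weaken its left side into $\bgamma^{*}$; if $v_k\notin\ast(v_{k_1},\ldots,v_{k_\ell})$, then every right-hand label $(\ast(\C_1,\ldots,\C_\ell),k')$ has $k'\neq k$, and repeated $k',k$-R-shift~(\ref{eq: rs}) moves each onto the left, where it is absorbed by $(\ast(\C_1,\ldots,\C_\ell),k)\in\bgamma^{*}$, emptying the succedent and, after weakening, yielding $\bgamma^{*}\rightarrow\Delta$, a contradiction; so $v(\ast(\C_1,\ldots,\C_\ell))\in\ast(v(\C_1),\ldots,v(\C_\ell))$. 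Second, $v$ refutes $\Gamma\rightarrow\Delta$: each $(\C,k)\in\Gamma\subseteq\bgamma^{*}$ gives $v(\C)=v_k$, while no $(\C,k)\in\Delta$ has $v(\C)=v_k$, since $(\C,k)\in\bgamma^{*}\cap\Delta$ would make $\bgamma^{*}\rightarrow\Delta$ derivable from~(\ref{eq: ordinary axiom}) by weakening. Third, $v$ satisfies each $\btheta_0\rightarrow\Delta_0$ in $\bsigma$: if $v$ makes its antecedent true then $\btheta_0\subseteq\bgamma^{*}$, so $\bsigma\vdash\bgamma^{*}\rightarrow\Delta_0$; were no label of $\Delta_0$ true under $v$, each $(\psi,k')\in\Delta_0$ would have a distinct $(\psi,k'')\in\bgamma^{*}$, and the same R-shift-and-absorb argument would collapse $\bgamma^{*}\rightarrow\Delta_0$ to $\bgamma^{*}\rightarrow$ and then, by weakening, to $\bgamma^{*}\rightarrow\Delta$, a contradiction. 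The main obstacle is this cluster of three verifications: arranging that the single maximal left-context $\bgamma^{*}$ simultaneously induces a legitimate non-deterministic valuation through the table axioms, falsifies $\Delta$, and validates all of $\bsigma$, all via the uniform R-shift, weakening, and cut manipulations available by Remark~\ref{r: all calculi}, while the compactness step underlying Zorn's lemma must be stated with care when $\fm$ is infinite.
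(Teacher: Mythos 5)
Your proposal is correct and follows essentially the same route as the paper: soundness by routine induction on derivations, and completeness via a Zorn's-lemma maximal extension $\bgamma^{*}\supseteq\Gamma$, with totality obtained from Proposition~\ref{p: all phi} by $n$ cuts, functionality from the two-label derivability argument, the valuation property from the table axioms plus structural manipulation, and satisfaction of $\bsigma$ by the R-shift argument. The only deviations are cosmetic: you prove the content of Lemma~\ref{l: disjoint} inline rather than citing it, and in verifying that $v$ respects the truth tables you use R-shift-and-absorb where the paper uses $k,k^\prime$-resolutions against the axiom $(\varphi,k)\rightarrow(\varphi,k)$ --- both are available structural rules and the steps are interchangeable.
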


An immediate corollary to Theorem~\ref{t: c and s of nmvla} is that
$\nmvla$ is (strongly) decidable.

Regarding Theorem~\ref{t: c and s of nmvla} itself,
soundness is easy to verify and,
for the proof of completeness, we proceed as follows.

\begin{lemma}
\label{l: disjoint}
{(\cite[Lemma 3.12]{KaminskiF21})}
If ${\not\vdash}_{\sbnmvla} \Gamma \rightarrow \Delta$,
then for no formula $\varphi$ and no $k^\prime,k^{\prime\prime}$ such
that $k^\prime \neq k^{\prime\prime}$,
$(\varphi , k^\prime),(\varphi , k^{\prime\prime}) \in\Gamma$.
\end{lemma}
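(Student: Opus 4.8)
The plan is to prove the contrapositive, exactly as the statement is phrased: assuming that some formula $\varphi$ and distinct labels $k',k''$ both occur on the left, i.e.\ $(\varphi,k'),(\varphi,k'') \in \Gamma$, I would exhibit a derivation of $\Gamma \rightarrow \Delta$ in $\nmvla$, which shows $\vdash_{\sbnmvla}\Gamma \rightarrow \Delta$ and hence contradicts the hypothesis $\not\vdash_{\sbnmvla}\Gamma \rightarrow \Delta$. The intuition is semantic: the antecedent of such a sequent asserts $v(\varphi)=v_{k'}$ \emph{and} $v(\varphi)=v_{k''}$ for $k'\neq k''$, which is never satisfiable, so by soundness every such sequent ought to be derivable regardless of $\Delta$. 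The proof-theoretic task is to realize this unsatisfiability syntactically using only the structural rules and the ordinary axioms, since these are the ingredients shared across all calculi (Remark~\ref{r: all calculi}).

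The key steps, in order, are as follows. First I would start from the ordinary axiom~(\ref{eq: ordinary axiom}) instantiated at $\varphi$ and $k'$, namely $(\varphi,k') \rightarrow (\varphi,k')$. Next I would apply the $k',k''$-R-shift rule~(\ref{eq: rs}) to move $(\varphi,k')$ from the right to the left relabelled as $k''$; since $k'\neq k''$ this rule is applicable and yields $(\varphi,k'),(\varphi,k'') \rightarrow\ $ with empty succedent. Then I would build up the desired context by repeated applications of $k$-L-weakening~(\ref{eq: lw}) to introduce all remaining labelled formulas of $\Gamma$ on the left, and $k$-R-weakening~(\ref{eq: rw}) to introduce all labelled formulas of $\Delta$ on the right; because $\Gamma$ and $\Delta$ are finite sets, finitely many weakenings suffice to reach exactly $\Gamma\rightarrow\Delta$. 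Since $(\varphi,k')$ and $(\varphi,k'')$ are already present, the weakenings only need to add the rest.

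The main obstacle I anticipate is bookkeeping rather than conceptual difficulty: I must confirm that the R-shift rule in the stated direction produces a left-premise pair with the \emph{same} formula $\varphi$ under two distinct labels, and that nothing forces the two labels in~(\ref{eq: rs}) to appear in a particular order—the rule is schematic in $k',k''$ with the sole side condition $k'\neq k''$, which matches our hypothesis. I would also double-check that weakening is genuinely unrestricted (both rules~(\ref{eq: lw}) and~(\ref{eq: rw}) carry no side conditions), so that introducing arbitrary labelled formulas into an already-derived sequent is legitimate. A minor subtlety is the set-based (rather than multiset-based) reading of $\Gamma$ and $\Delta$: if some labelled formula I weaken in is already present, the set union simply absorbs it, so the target sequent is still reached. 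Once the derivation $(\varphi,k'),(\varphi,k'')\rightarrow\ $ is obtained and padded out by weakenings to $\Gamma\rightarrow\Delta$, the contrapositive is complete. Note that this argument is purely structural and invokes only axioms~(\ref{eq: ordinary axiom}) and rules~(\ref{eq: ls})--(\ref{eq: resolution}), so by Remark~\ref{r: all calculi} it transfers verbatim from the deterministic setting of \cite{KaminskiF21}.
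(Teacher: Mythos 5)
Your proof is correct: the paper itself does not spell out a proof of this lemma (it cites \cite[Lemma 3.12]{KaminskiF21}), and your contrapositive argument---ordinary axiom $(\varphi,k')\rightarrow(\varphi,k')$, one $k',k''$-R-shift~(\ref{eq: rs}) to obtain $(\varphi,k'),(\varphi,k'')\rightarrow\ $, then L- and R-weakenings to pad out to $\Gamma\rightarrow\Delta$---is exactly the standard structural derivation that the cited lemma rests on, using only the ingredients covered by Remark~\ref{r: all calculi}.
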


\noindent \textbf{Proof}
{\bf of the completeness part of Theorem~\ref{t: c and s of nmvla}:}\\
Assume to the contrary that
$\bsigma \ {\not\vdash}_{\sbnmvla} \Gamma \rightarrow \Delta$.
Then, by Zorn's lemma,
there is a maximal (with respect to inclusion) set of labelled formulas
$\bgamma$ including $\Gamma$ such that
for no finite subset $\Gamma^\prime$ of $\bgamma$,
$\bsigma \ {\not\vdash}_{\sbnmvla} \Gamma^\prime \rightarrow \Delta$.

\medskip
We observe that for each formula $\varphi$ there is
a $k \in \{ 1,\ldots,n \}$ such that $(\varphi,k) \in \bgamma$.\footnote{
In other words, $\bgamma$ is \textit{complete},
cf.~\cite[paragraph~3.63]{HanazawaT86} and
the definition of the ``classical'' negation completeness.}
For the proof, assume to the contrary that for
each $k \in \{ 1,\ldots,n \}$
there is a finite subset $\Gamma_k$ of $\bgamma$ such that
\begin{equation}
\label{eq: gamma_k}
\bsigma \vdash_{\sbnmvla} \Gamma_k, (\varphi,k) \rightarrow \Delta
\end{equation}
Then, from~(\ref{eq: all phi}) and~(\ref{eq: gamma_k}),
by $n$ cuts we obtain
\[
\bsigma \vdash_{\sbnmvla} \bigcup\limits_{k=1}^n
\Gamma_k \rightarrow
\Delta
\]
which contradicts the definition of $\bgamma$.

\medskip
Let the valuation $v : \fm \rightarrow \{ v_1,\ldots,v_n \}$
be defined by
\begin{equation}
\label{eq: v}
v(\varphi) = v_k , \ \, \mbox{if} \ \, (\varphi,k) \in \bgamma
\end{equation}

We contend that $v$ is well-defined.

\medskip
First we show that $v$ is a function.
For the proof, assume to the contrary that for some
formula $\varphi$ and some $k^\prime$ and $k^{\prime\prime}$
such that $k^\prime \neq k^{\prime\prime}$ both
$(\varphi,k^\prime)$ and $(\varphi,k^{\prime\prime})$ are in $\bgamma$.
Then, by (the contraposition of) Lemma~\ref{l: disjoint},
\[
\vdash_{\sbnmvla} (\varphi,k^\prime) , (\varphi,k^{\prime\prime})
\rightarrow \Delta
\]
which contradicts the definition of $\bgamma$.

\medskip
Next, we are going to show that the function $v : \fm \rightarrow V$
defined by~(\ref{eq: v}) is indeed a valuation.
\vspace{0.4 em}

The proof is by induction on the complexity of $\varphi$.
The basis (in which $\varphi$ is an atomic formula)
is by the definition of $v$, see~(\ref{eq: v}), and,
for the induction step assume that $\varphi$ is of the form
$\ast(\varphi_1,\ldots,\varphi_\ell)$.

\medskip
Let $(\varphi , k) \in \bgamma$ and
let $v(\varphi_j) = v_{k_j}$, $j = 1,\ldots,\ell$.
By the induction hypothesis,
$(\varphi_j , k_j) \in \bgamma$, $j = 1,\ldots,\ell$.
In addition, from the table axiom~(\ref{eq: table axiom}),
by a number of weakenings, we obtain
\begin{equation}
\label{eq: for v}
(\varphi_1,k_1),\ldots,(\varphi_\ell,k_\ell)
\rightarrow
\Delta ,
\{
(\varphi , k^\prime) : v_{k^\prime} \in \ast(v_{k_1},\ldots,v_{k_\ell})
\}
\end{equation}

\eject

Now, assume to the contrary that
$v(\varphi) \notin \ast(v_{k_1},\ldots,v_{k_j})$.
Then, from~(\ref{eq: for v}) and the axiom
$(\varphi , k) \rightarrow (\varphi , k)$,
by a number of $k,k^\prime$-resolutions,
\[
(\varphi,k), (\varphi_1,k_1),\ldots,(\varphi_\ell,k_\ell)
\rightarrow
\Delta
\]
However, the latter contradicts the definition of $\bgamma$.

\medskip
We note next that for no labelled formula $(\varphi , k) \in \Delta$,
$v(\varphi) = v_k$.
Indeed, if for some $(\varphi , k) \in \Delta$, $v(\varphi) = v_k$,
then, by the definition of $v$, see~(\ref{eq: v}),
$(\varphi , k) \in \bgamma$,
which contradicts the definition of~$\bgamma$.

It remains to prove that $v$ satisfies each sequent in $\bsigma$.
Let $\Gamma^\prime \rightarrow \Delta^\prime \in \bsigma$ be such that
$v$ satisfies each labelled formula in $\Gamma^\prime$,
which, by~(\ref{eq: v}),
is equivalent to $\Gamma^\prime \subseteq \bgamma$.
We have to show that $v$ satisfies some labelled formula in
$\Delta^\prime$,
which, by~(\ref{eq: v}),
is equivalent to $\Delta^\prime \cap \bgamma \neq \emptyset$.

\medskip
Assume to the contrary that $\Delta^\prime \cap \bgamma = \emptyset$.
Let $(\varphi , k) \in \Delta^\prime$ and
let $v(\varphi) = v_{k_\varphi}$.
Then, by~(\ref{eq: v}),
$k_\varphi \neq k$ and $(\varphi, k_\varphi) \in \bgamma$.
From $\Gamma^\prime \rightarrow \Delta^\prime$ by $k,k_\varphi$-R-shifts
(for each $(\varphi , k) \in \Delta^\prime$) we obtain
\[
\bsigma \vdash_{\sbnmvla}
\Gamma^\prime,
\{ (\varphi, k_\varphi) : (\varphi , k) \in \Delta^\prime \} \rightarrow
\]
from which, by weakenings,
\begin{equation}
\label{eq: contradiction}
\bsigma \vdash_{\sbnmvla}
\Gamma^\prime,
\{ (\varphi, k_\varphi) : (\varphi , k) \in \Delta^\prime \} \rightarrow
\Delta
\end{equation}
However,
since
\[
\Gamma^\prime,
\{ (\varphi, k_\varphi) : (\varphi , k) \in \Delta^\prime \} \subseteq
\bgamma
\]
(\ref{eq: contradiction}) contradicts the definition of $\bgamma$.\QED

\section{Replacing axioms with with rules of inference}
\label{s: nmvlr}

The sequent calculus $\nmvlr$ in this section is
the ``sequent counterpart'' of
the deduction system $S \hspace{-0.06 em} F_\mathcal{M}^d$
from~\cite[Section 3.1]{AvronK05}.
Namely, $\nmvlr$ results from $\nmvla$ by replacing
axioms~(\ref{eq: table axiom}) with the rules of inference
\begin{equation}
\label{eq: table rule s}
\frac
{\textstyle
\Gamma \rightarrow
\Delta , (\varphi_j , k_j) ,
\ \ \ \ \ \
j = 1,\ldots,\ell}
{\textstyle
\Gamma \rightarrow \Delta ,
\{ (\ast(\varphi_1,\ldots,\varphi_\ell),k) :
v_k \in \ast(v_{k_1},\ldots,v_{k_\ell}) \}}
\end{equation}
for each table entry $\ast(v_{k_1},\ldots,v_{k_\ell})$.

\begin{proposition}
\label{p: equivalence}
Let $\bsigma$ and $\Sigma$ be a set of sequents and a sequent,
respectively.
Then
$\bsigma \, \vdash_{\sbnmvlr} \Sigma$
if and only if
$\bsigma \, \vdash_{\sbnmvla} \Sigma$.
\end{proposition}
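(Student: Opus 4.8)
The plan is to exploit Remark~\ref{r: all calculi}: the calculi $\nmvla$ and $\nmvlr$ share exactly the same ordinary axioms~(\ref{eq: ordinary axiom}) and exactly the same structural rules~(\ref{eq: ls})--(\ref{eq: resolution}); the sole difference is that $\nmvla$ has the table \emph{axioms}~(\ref{eq: table axiom}) while $\nmvlr$ has the table \emph{rules}~(\ref{eq: table rule s}). Hence the proposition reduces to two local claims: (i) every table axiom of $\nmvla$ is derivable in $\nmvlr$, and (ii) every instance of the table rule of $\nmvlr$ is a derivable rule of $\nmvla$. Granting both, an arbitrary $\nmvla$-derivation is turned into an $\nmvlr$-derivation by replacing each table-axiom leaf with its $\nmvlr$-derivation from (i), and conversely each $\nmvlr$-derivation becomes an $\nmvla$-derivation by replacing each table-rule application with the $\nmvla$-derivation segment from (ii). In both transformations the hypotheses from $\bsigma$ remain untouched leaves of the derivation tree, so the argument is insensitive to the side set $\bsigma$ and yields exactly the biconditional claimed.

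For (i), fix a table entry and abbreviate $\varphi = \ast(\varphi_1,\ldots,\varphi_\ell)$. Starting from the ordinary axioms $(\varphi_j,k_j)\rightarrow(\varphi_j,k_j)$ and applying $k$-L-weakenings~(\ref{eq: lw}), I obtain the $\ell$ sequents $(\varphi_1,k_1),\ldots,(\varphi_\ell,k_\ell)\rightarrow(\varphi_j,k_j)$ for $j=1,\ldots,\ell$. These are precisely the premises of the table rule~(\ref{eq: table rule s}) taken with $\Gamma=\{(\varphi_1,k_1),\ldots,(\varphi_\ell,k_\ell)\}$ and $\Delta=\emptyset$, so a single application of~(\ref{eq: table rule s}) delivers the table axiom~(\ref{eq: table axiom}). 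This direction is immediate.

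For (ii), suppose given $\nmvla$-derivations of the $\ell$ premises $\Gamma\rightarrow\Delta,(\varphi_j,k_j)$, and write $\Delta'=\Delta\cup\{(\varphi,k):v_k\in\ast(v_{k_1},\ldots,v_{k_\ell})\}$ for the succedent of the desired conclusion. Beginning with the table axiom~(\ref{eq: table axiom}) and weakening by $\Gamma$ on the left and by $\Delta$ on the right, I obtain $\Gamma,(\varphi_1,k_1),\ldots,(\varphi_\ell,k_\ell)\rightarrow\Delta'$. I then peel the formulas $(\varphi_j,k_j)$ off the antecedent one at a time by $k$-cuts~(\ref{eq: cut}), processing $j=\ell,\ell-1,\ldots,1$. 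At stage $j$ the weakened table axiom (already reduced to antecedent $\Gamma,(\varphi_1,k_1),\ldots,(\varphi_j,k_j)$) serves as the right cut-premise, and the left cut-premise $\Gamma,(\varphi_1,k_1),\ldots,(\varphi_{j-1},k_{j-1})\rightarrow\Delta',(\varphi_j,k_j)$ is produced from the given premise $\Gamma\rightarrow\Delta,(\varphi_j,k_j)$ by L- and R-weakenings; the cut yields $\Gamma,(\varphi_1,k_1),\ldots,(\varphi_{j-1},k_{j-1})\rightarrow\Delta'$. After $\ell$ such cuts the construction terminates in $\Gamma\rightarrow\Delta'$, which is the conclusion of the table rule.

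Both directions are routine, and I expect no genuine obstacle. The only point requiring care is the bookkeeping in (ii): each $k$-cut~(\ref{eq: cut}) demands that its two premises share identical side contexts, so before every cut I must first bring the given premise $\Gamma\rightarrow\Delta,(\varphi_j,k_j)$ into the common context $\Gamma,(\varphi_1,k_1),\ldots,(\varphi_{j-1},k_{j-1})\rightarrow\Delta',(\varphi_j,k_j)$ via the appropriate weakenings. This context-matching, rather than any conceptual difficulty, is the delicate step, and it is handled uniformly by the iterated construction above.
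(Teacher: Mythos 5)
Your proof is correct and follows essentially the same route as the paper's: table axioms~(\ref{eq: table axiom}) are obtained in $\nmvlr$ from ordinary axioms by L-weakenings followed by one application of~(\ref{eq: table rule s}), and conversely each instance of~(\ref{eq: table rule s}) is simulated in $\nmvla$ by $\ell$ cuts against the corresponding table axiom. Your explicit context-matching via weakenings before each cut is precisely the bookkeeping the paper leaves implicit in its schematic ``by $\ell$ cuts'' display.
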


\begin{proof}
The proof is similar to that of~\cite[Proposition 4.1]{KaminskiF21}.
 \eject
For the proof of the ``only if'' part of the proposition
it suffices to show
that axioms~(\ref{eq: table axiom}) are derivable in $\nmvlr$.
The derivation is as follows.
\[
\begin{array}{c}
\begin{array}{c}
(\varphi_j , k_j) \rightarrow (\varphi_j , k_j)
\\
\hline
(\varphi_1 , k_1),\ldots,(\varphi_\ell , k_\ell)
\rightarrow
(\varphi_j , k_j)
\end{array}
\, \mbox{L-weakenings}
\ \ \
\begin{array}{c}
~
\\
j = 1,\ldots,\ell
\end{array}
\\
\hline
(\varphi_1 , k_1),\ldots,(\varphi_\ell , k_\ell)
\!
\hspace{-0.1 em}
\rightarrow
\!
\hspace{-0.1 em}
\{ (\ast(\varphi_1,\ldots,\varphi_\ell),k) :
v_k \in \ast(v_{k_1},\ldots,v_{k_\ell}) \}
\end{array}
\,
\hspace{-0.3 em}
\begin{array}{l}
~
\\
(\ref{eq: table rule s})
\end{array}
\]

Conversely,
for the proof of the ``if'' part of the proposition it suffices to show
that rules~(\ref{eq: table rule s}) are derivable in $\nmvla$.
The derivation is by $\ell$ cuts:
 \scriptsize{\[ \hspace{-0.2 em}
 \frac{\textstyle
\{ \Gamma \rightarrow \Delta, (\varphi_j , k_j) : j = 1, \ldots, \ell \},
\ \ 
(\varphi_1 , k_1), \ldots, (\varphi_\ell , k_\ell)
\rightarrow
\{ (\ast(\varphi_1, \ldots,\varphi_\ell),k) :
v_k \in \ast(v_{k_1}, \ldots, v_{k_\ell}) \}}
{\textstyle
\Gamma
\rightarrow
\Delta ,
\{ (\ast(\varphi_1,\ldots,\varphi_\ell),k) :
v_k \in \ast(v_{k_1}, \ldots, v_{k_\ell}) \}
}
\] }\normalsize

\vspace*{-6mm}
\end{proof}

\begin{corollary}
\label{c: c and s of nmvlr}
{ (Soundness and completeness of $\nmvlr$)}
Let $\bsigma$ and $\Sigma$ be a set of sequents and
a sequent, respectively.
Then $\bsigma \vdash_{\sbnmvlr} \Sigma$
if and only if
$\bsigma \models \Sigma$.
\end{corollary}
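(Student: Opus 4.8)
The plan is to obtain the result immediately by chaining the two equivalences already at our disposal, so that no fresh semantic or proof-theoretic argument is needed. Recall that a sequent $\Sigma$ is simply an expression of the form $\Gamma \rightarrow \Delta$, so the scope of Theorem~\ref{t: c and s of nmvla} already covers an arbitrary $\Sigma$. Concretely, Proposition~\ref{p: equivalence} asserts the deductive equivalence of the two calculi, namely that $\bsigma \vdash_{\sbnmvlr} \Sigma$ holds precisely when $\bsigma \vdash_{\sbnmvla} \Sigma$ holds, for every set of sequents $\bsigma$ and every sequent $\Sigma$. Theorem~\ref{t: c and s of nmvla}, in turn, characterizes derivability in $\nmvla$ semantically: $\bsigma \vdash_{\sbnmvla} \Sigma$ if and only if $\bsigma \models \Sigma$.

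First I would invoke Proposition~\ref{p: equivalence} to pass from $\nmvlr$-derivability to $\nmvla$-derivability, and then apply Theorem~\ref{t: c and s of nmvla} to replace $\nmvla$-derivability with semantic entailment. Since both statements are biconditionals, transitivity yields
\[
\bsigma \vdash_{\sbnmvlr} \Sigma
\iff
\bsigma \vdash_{\sbnmvla} \Sigma
\iff
\bsigma \models \Sigma ,
\]
which is exactly the claim; soundness and completeness for $\nmvlr$ are thereby established in one stroke, with neither direction requiring separate treatment.

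There is no genuine obstacle remaining at this stage: the substantive work lies entirely in the two results being combined, and the corollary is a purely formal consequence of their conjunction. The completeness half of Theorem~\ref{t: c and s of nmvla}, carried out via a Lindenbaum-style maximal extension $\bgamma$ of $\Gamma$ together with Lemma~\ref{l: disjoint} (to verify that the induced valuation is a well-defined function) and the table axioms (to verify that it respects the non-deterministic truth tables), together with the mutual-derivability arguments of Proposition~\ref{p: equivalence}, bear all the weight. Accordingly, I would keep the proof of this corollary to the single chain of equivalences displayed above.
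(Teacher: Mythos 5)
Your proof is correct and is exactly the paper's own argument: the paper also derives the corollary by combining Proposition~\ref{p: equivalence} with Theorem~\ref{t: c and s of nmvla}. Your chained biconditional is simply a more explicit write-up of that one-line proof.
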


\begin{proof}
The corollary follows from Proposition~\ref{p: equivalence} and
Theorem~\ref{t: c and s of nmvla}.
\end{proof}

\begin{theorem}
\label{t: nmvlr c-c elimination}
{(Cut/resolution elimination)}
Each $\nmvlr$-derivable sequent is
derivable without cut or resolution.
\end{theorem}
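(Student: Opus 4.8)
The plan is to deduce cut and resolution elimination from a strengthened completeness result, namely that the cut- and resolution-free fragment of $\nmvlr$ is already complete. By Corollary~\ref{c: c and s of nmvlr} every $\nmvlr$-derivable sequent is valid, and soundness of the smaller fragment is immediate, so it suffices to prove: if $\models \Gamma \rightarrow \Delta$ then $\Gamma \rightarrow \Delta$ has an $\nmvlr$-derivation using neither rule~(\ref{eq: cut}) nor rule~(\ref{eq: resolution}). I would prove the contrapositive, constructing a valuation that falsifies a sequent having no such derivation, by re-running the completeness argument of Theorem~\ref{t: c and s of nmvla} with ``derivable'' everywhere read as ``derivable without cut or resolution''.

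Re-examining that argument, the construction of the canonical set $\bgamma$ and of $v$, the disjointness $\Delta \cap \bgamma = \emptyset$, and the functionality of $v$ all go through verbatim: by Remark~\ref{r: all calculi} and Lemma~\ref{l: disjoint} these use only ordinary axioms~(\ref{eq: ordinary axiom}), R-shifts~(\ref{eq: rs}) and weakenings. Exactly two steps leave the fragment. First, the proof that $v$ respects the truth tables used resolution~(\ref{eq: resolution}) against a table axiom. Second, the proof that every formula receives a value used $n$ applications of cut~(\ref{eq: cut}) against the sequent~(\ref{eq: all phi}).

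The first step is where passing from $\nmvla$ to $\nmvlr$ pays off, and I expect it to be routine. Given $(\ast(\varphi_1,\ldots,\varphi_\ell),k)\in\bgamma$ and $(\varphi_j,k_j)\in\bgamma$ with $v(\varphi_j)=v_{k_j}$, suppose $v_k\notin\ast(v_{k_1},\ldots,v_{k_\ell})$. From the weakened ordinary axioms $(\varphi_1,k_1),\ldots,(\varphi_\ell,k_\ell)\rightarrow(\varphi_j,k_j)$ the table rule~(\ref{eq: table rule s}) yields $(\varphi_1,k_1),\ldots,(\varphi_\ell,k_\ell)\rightarrow\{(\ast(\varphi_1,\ldots,\varphi_\ell),k'):v_{k'}\in\ast(v_{k_1},\ldots,v_{k_\ell})\}$, exactly as in the ``only if'' derivation of Proposition~\ref{p: equivalence}. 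Adjoining $(\ast(\varphi_1,\ldots,\varphi_\ell),k)$ on the left by weakening and noting that $k$ is absent from the right-hand set, repeated R-shifts~(\ref{eq: rs}) (each with $k''=k$) move every right-hand labelled formula onto the left, where it coincides with the formula already present, emptying the succedent. The resulting sequent is a finite subset of $\bgamma$ with empty succedent, contradicting the definition of $\bgamma$ --- and it was obtained without cut or resolution.

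The second step is the main obstacle. Combining the $n$ sequents $\Gamma_k,(\varphi,k)\rightarrow\Delta$ into $\bigcup_k\Gamma_k\rightarrow\Delta$ genuinely requires cut or resolution: the shifts alone cannot ``resolve away'' the labels of an atomic $\varphi$, so the Lindenbaum-style maximization cannot by itself secure totality in the weaker fragment. I would therefore replace the maximization by an actual cut- and resolution-free proof search --- a reduction tree built by applying~(\ref{eq: table rule s}) and the shifts backwards until saturation, which terminates by the subformula property. A non-closing branch yields a downward-saturated set from which $v$ is read off, each formula being assigned a value that does not occur on the right of the branch; such a value exists because otherwise~(\ref{eq: all phi}) (cut-free derivable by Proposition~\ref{p: all phi}) together with weakenings would close the branch. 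The delicate point, and the place where the set-valued conclusion of~(\ref{eq: table rule s}) and the non-determinism of $\ast$ complicate the bookkeeping, is to organise the saturation so that the value read off for each compound $\ast(\varphi_1,\ldots,\varphi_\ell)$ necessarily lies in $\ast(v(\varphi_1),\ldots,v(\varphi_\ell))$; the closure computation of the previous paragraph is exactly what drives this. Since the derivations produced use neither~(\ref{eq: cut}) nor~(\ref{eq: resolution}), both rules are eliminated simultaneously, as their inter-derivability (noted after~(\ref{eq: resolution})) suggests they must be.
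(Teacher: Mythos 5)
Your strategy---deduce cut/resolution elimination semantically, from soundness of the full calculus (Corollary~\ref{c: c and s of nmvlr}) together with completeness of the cut- and resolution-free fragment---is legitimate in principle, but it is not the paper's route: the paper gives a purely syntactic double induction that locally eliminates the topmost cut/resolution, the only case specific to non-determinism being a resolution whose principal formula was introduced on both premises by rule~(\ref{eq: table rule s}) with the \emph{same} label tuple, in which case the conclusion of the resolution coincides with that of one premise and the resolution is simply deleted. Measured as a proof, your proposal has a genuine gap at exactly the step that carries all the weight: completeness of the cut- and resolution-free fragment. You correctly diagnose that the Lindenbaum argument of Theorem~\ref{t: c and s of nmvla} cannot be reused (totality of $\bgamma$ needs the $n$ cuts against~(\ref{eq: all phi})), and you switch to a backwards proof-search/saturation argument; but you then leave undone precisely the crux of that argument, namely that the valuation read off an open saturated branch respects the non-deterministic truth tables. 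Your claim that ``the closure computation of the previous paragraph is exactly what drives this'' does not stand: that computation lives in the Lindenbaum setting you have just abandoned (it derives a contradiction with the maximality of $\bgamma$, using that $\bgamma$ already contains $(\ast(\varphi_1,\ldots,\varphi_\ell),k)$ and the $(\varphi_j,k_j)$), and it does not convert into a saturation condition on branches without substantial new work.

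The difficulty is concrete, and it is aggravated by the fact that $\nmvlr$ has no left-introduction rules for connectives. Take $n=2$ and a unary $\ast$ with $\ast(v_1)=\ast(v_2)=\{v_2\}$. The sequent $(\ast(p),1)\rightarrow\ $ is valid, and a cut-free derivation of it is forced to be indirect: from the axiom $(p,1)\rightarrow(p,1)$, rule~(\ref{eq: table rule s}) gives $(p,1)\rightarrow(\ast(p),2)$; an R-shift~(\ref{eq: rs}) gives $(p,1),(\ast(p),1)\rightarrow\ $; an L-shift~(\ref{eq: ls}) gives $(\ast(p),1)\rightarrow(p,2)$; rule~(\ref{eq: table rule s}) applied again, now with $(\ast(p),1)$ parked in the antecedent, gives $(\ast(p),1)\rightarrow(\ast(p),2)$; and a final R-shift gives $(\ast(p),1)\rightarrow\ $. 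So a complete proof search cannot simply ``apply~(\ref{eq: table rule s}) and the shifts backwards until saturation'' in the sense of decomposing formulas: it must allow labelled formulas to migrate back and forth across the arrow and allow the table rules to fire while the compound stands on the left, and the countermodel lemma---that an open saturated branch yields a well-defined valuation satisfying $v(\ast(\varphi_1,\ldots,\varphi_\ell))\in\ast(v(\varphi_1),\ldots,v(\varphi_\ell))$ for compounds occurring on either side, and falsifying the end-sequent---must be proved relative to whatever saturation conditions make this search terminate. On your route that lemma \emph{is} the theorem, and it is missing; by contrast, the paper's local syntactic argument needs only the one-case analysis described above.
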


\begin{proof}
By double induction, the outer on the derivation length and
the inner on the complexity of the principal formula,
we eliminate the first cut/resolution in the derivation.

The outer induction and the basis of the inner induction do not involve
rules~(\ref{eq: table rule s}).
Thus, they are like in the corresponding proofs
in~\cite[Section~4.2]{KaminskiF21}.

The the induction step of the inner induction is treated as follows.

\medskip
Since $\ast(\varphi_1,\ldots,\varphi_\ell)$ may be introduced
into the succeedent, only, this is the case of $k^\prime,k^{\prime\prime}$-res\-ol\-u\-tion~(\ref{eq: resolution}),
with $\varphi$ being $\ast(\varphi_1,\ldots,\varphi_\ell)$.
Namely, we have
\begin{equation}\scriptsize
\label{eq: for resolution}
\hspace{-0.2 em}
\begin{array}{c}
\frac
{\textstyle
\Gamma \rightarrow \Delta , (\varphi_j , k_j^\prime) ,
\ \ \ \ \ \
j = 1,\ldots,\ell}
{\textstyle
\Gamma \rightarrow \Delta ,
\{ (\ast(\varphi_1,\ldots,\varphi_\ell),k) :
v_k \in \ast(v_{k_1^\prime},\ldots,v_{k_\ell^\prime}) \}}
\ \ \ \ \ \
\frac
{\textstyle
\Gamma \rightarrow \Delta , (\varphi_j , k_j^{\prime\prime}) ,
\ \ \ \ \ \
j = 1,\ldots,\ell}
{\textstyle
\Gamma \rightarrow \Delta ,
\{ (\ast(\varphi_1,\ldots,\varphi_\ell),k) :
v_k \in \ast(v_{k_1^{\prime\prime}},\ldots,v_{k_\ell^{\prime\prime}}) \}}
\\
\hline
\Gamma \rightarrow \Delta ,
\{ (\ast(\varphi_1,\ldots,\varphi_\ell),k) :
v_k \in \ast(v_{k_1^\prime},\ldots,v_{k_\ell^\prime}) , \
k \neq k^\prime \}
\cup
\{ (\ast(\varphi_1,\ldots,\varphi_\ell),k) :
v_k \in \ast(v_{k_1^{\prime\prime}},\ldots,v_{k_\ell^{\prime\prime}}) , \
k \neq k^{\prime\prime}\}
\end{array}
\end{equation}

If
$(k_1^\prime,\ldots,k_\ell^\prime)
\neq
(k_1^{\prime\prime},\ldots,k_\ell^{\prime\prime})$
then, for some $j = 1,\ldots,\ell$,
$k_j^\prime \neq k_j^{\prime\prime}$ and
we may apply $k_j^\prime,k_j^{\prime\prime}$-resolution
from which we proceed by R-weakenings:
\[ \scriptsize
\hspace{-0.2 em}
\begin{array}{c}
\frac
{\textstyle
\Gamma \rightarrow
\Delta , (\varphi_j , k_j^\prime )
\ \ \ \ \ \
\Gamma
\rightarrow
\Delta , (\varphi_j , k_j^{\prime\prime} )
}
{\textstyle
\Gamma \rightarrow \Delta
}
\vspace{0.1 em}
\\
\hline
\Gamma \rightarrow \Delta ,
\{ (\ast(\varphi_1,\ldots,\varphi_\ell),k) :
v_k \in \ast(v_{k_1^\prime},\ldots,v_{k_\ell^\prime}) , \
k \neq k^\prime \}
\cup
\{ (\ast(\varphi_1,\ldots,\varphi_\ell),k) :
v_k \in \ast(v_{k_1^{\prime\prime}},\ldots,v_{k_\ell^{\prime\prime}}) , \
k \neq k^{\prime\prime} \}
\end{array}
\]

Otherwise, i.e., if
$(k_1^\prime,\ldots,k_\ell^\prime)
=
(k_1^{\prime\prime},\ldots,k_\ell^{\prime\prime})$
then the conclusion of~(\ref{eq: for resolution}) is
\[
\Gamma \rightarrow \Delta ,
\{ (\ast(\varphi_1,\ldots,\varphi_\ell),k) :
v_k \in \ast(v_{k_1^\prime},\ldots,v_{k_\ell^\prime}) \}
\]
and we can replace~(\ref{eq: for resolution}) with
any of its premises,
 \scriptsize{ \[
\frac
{\textstyle
\Gamma \rightarrow \Delta , (\varphi_j , k_j^\prime) ,
\ \ \ \ \ \
j = 1,\ldots,\ell}
{\textstyle
\Gamma \rightarrow \Delta ,
\{ (\ast(\varphi_1,\ldots,\varphi_\ell),k) :
v_k \in \ast(v_{k_1^\prime},\ldots,v_{k_\ell^\prime}) \}}
\] }\normalsize
say.\footnote{In fact, the case of the equality
$(k_1^\prime,\ldots,k_\ell^\prime)
=
(k_1^{\prime\prime},\ldots,k_\ell^{\prime\prime})$
is the only modification needed for the extension of the proofs
of cut and resolution elimination in~\cite{KaminskiF21}
to the case of non-deterministic logics.}
\end{proof}\vspace*{-4mm}

\appendix

\section{Duality}
\label{s: duality}

In the appendix, we consider two kinds of duality
with respect to $\nmvla$ and $\nmvlr$.
One is the ``distributive'' duality resulting from the duality
between metaconjunction and metadisjunction and
the other is the ``sequent" duality resulting from the duality
between the succeedent and the antecedent of a sequent.

\medskip
Accordingly, Section~\ref{as: dd} deals with
the  ``di-stri-bu-ti-vely''  dual systems $\nmvladd$ and \linebreak $\nmvlrdd$ of
$\nmvla$ and $\nmvlr$, respectively; and
Section~\ref{as: sd} deals with the ``sequent'' duality of
$\nmvlr$ and $\nmvlrdd$.
The results in Section~\ref{as: sd} are
the respective ``antecedent'' counterparts
of those in Sections~\ref{s: nmvlr} and~\ref{as: dd}.

\subsection{Distributive duality}
\label{as: dd}

In this section, we present the calculi $\nmvladd$ and $\nmvlrdd$
which are \textit{distributively dual} and deductively equivalent
to $\nmvla$ and $\nmvlr$, respectively,
cf.~\cite{Roussenau67}.
These calculi are presented in Sections~\ref{sas: dd axioms}
and~\ref{sas: dd rules}.
Both employ the following notation.

\medskip
For a labelled formula $(\ast(\varphi_1,\ldots,\varphi_\ell),k)$,
we define the set of sets of labelled formulas
$(\ast(\varphi_1,\ldots\,$, $\varphi_\ell) \times k)^{-1}$
by
\begin{equation}
\label{eq: d -1}
(\ast(\varphi_1,\ldots,\varphi_\ell) \times k)^{-1} =
\{
\{ (\varphi_1 , k_1),\ldots, (\varphi_\ell , k_\ell) \}
:
v_k \in \ast(v_{k_1},\ldots,v_{k_\ell}) \}
\end{equation}
and,
in what follows, we enumerate the sets in
$(\ast(\varphi_1,\ldots,\varphi_\ell) \times k)^{-1}$
as
\begin{equation}
\label{eq: enumeration}
(\ast(\varphi_1,\ldots,\varphi_\ell),k)^{-1}
=
\{
\{ (\varphi_1 , k_{1,q}),\ldots, (\varphi_\ell , k_{\ell,q}) \}
:
q = 1,\ldots,\sk
\}
\,
\footnote{
Note that $\sk$ depends both on $\ast$ and $k$.}
\end{equation}
That is,
\[
(\ast(\varphi_1,\ldots,\varphi_\ell) \times k)^{-1}
=
\{ \Theta_1,\ldots,\Theta_{\sk} \}
\]
where
\begin{equation}
\label{eq: gamma q}
\Theta_q = \{ (\varphi_1 , k_{1,q}),\ldots, (\varphi_\ell , k_{\ell,q}) \}
\end{equation}
$q = 1,\ldots,\sk$.\footnote{
Note the form of $\Theta_q$:
for each $j = 1,\ldots,\ell$ it contains exactly one labelled formula
with the first component $\varphi_j$.}

\medskip
Next, for sets $\Theta_1,\ldots,\Theta_{\sk}$ of labelled formulas,
$\Theta_q$ as in~(\ref{eq: gamma q}),
$q = 1,\ldots,\sk$,
we define the set $\bigvee\limits_{q=1}^s \Theta_q$
of sets of labelled formulas by

\eject

\hbox{}
\vspace*{-11mm}
\[
\bigvee_{q=1}^s \Theta_q
=
\{
\{ ( \varphi_{j_1} , k_{j_1,1}) ,\ldots, ( \varphi_{j_{\sk}} , k_{j_{\sk},\sk}) \} :
( \varphi_{j_q} , k_{j_q,q}) \in \Theta_q \, , \ q = 1,\dots,\sk \}
\]
That is, for each $q = 1,\ldots,\sk$,
the elements of $\bigvee\limits_{q=1}^s \Theta_q$
contain one element of $\Theta_q$ and nothing more.

Similarly,
for a formula $\ast(\varphi_1,\ldots,\varphi_\ell))$ and
a subset $K$ of $\{1,\ldots,n \}$,
we define the set of sets of labelled formulas
$(\ast(\varphi_1,\ldots,\varphi_\ell) \times K)^{-1}$
by
\[
(\ast(\varphi_1,\ldots,\varphi_\ell) \times K)^{-1} =
\{ \{ (\varphi_1 , k_1),\ldots, (\varphi_\ell , k_\ell) \} : \ast(v_{k_1},\ldots,v_{k_\ell}) = \{ v_k : k \in K \} \}
\]

\begin{remark}
\label{r: notation}
Let $k_1,\ldots,k_\ell \in \{ 1,\ldots , n \}$ and
let $K$ be such that
$\ast(v_{k_1},\ldots,v_{k_\ell}) = \{ v_k : k \in K \}$.
Then, in the above notation, table axioms~(\ref{eq: table axiom}) are
\[
\Theta \rightarrow \{ \ast(\varphi_1,\ldots,\varphi_\ell) \} \times K
\]
for all $\Theta \in (\ast(\varphi_1,\ldots,\varphi_\ell) \times K)^{-1}$.
\end{remark}

\subsubsection{Distributively dual axioms}
\label{sas: dd axioms}

The distributively dual counterpart of $\nmvla$,
denoted $\nmvladd$,
is based on Proposition~\ref{p: dual axiom} below.

\begin{proposition}
\label{p: dual axiom}
{(Cf.~\cite[Proposition~5.3]{KaminskiF21}.)}
For all $k$ and all
\[
\{(\varphi_{j_1} , k_{j_1,1}) ,\ldots, (\varphi_{j_{\sk}} , k_{j_{\sk},\sk})\}
\in \bigvee(\ast(\varphi_1,\ldots,\varphi_\ell) \times k)^{-1}
\]
the sequent
\begin{equation}
\label{eq: dual axiom}
(\ast(\varphi_1, \ldots, \varphi_\ell) , k) \rightarrow
(\varphi_{j_1} , k_{j_1,1}) ,\ldots, (\varphi_{j_{\sk}} , k_{j_{\sk},\sk})
\end{equation}
is derivable in $\nmvla$.
\end{proposition}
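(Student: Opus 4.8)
The plan is to prove the statement semantically and then invoke the already-established completeness of $\nmvla$, rather than to grind out an explicit derivation. Concretely, I first check that the sequent~(\ref{eq: dual axiom}) is \emph{valid}, i.e. $\models (\ast(\varphi_1,\ldots,\varphi_\ell),k) \rightarrow (\varphi_{j_1},k_{j_1,1}),\ldots,(\varphi_{j_{\sk}},k_{j_{\sk},\sk})$, and then read off derivability in $\nmvla$ from Theorem~\ref{t: c and s of nmvla} (applied with $\bsigma=\emptyset$, so that $\emptyset\models\Sigma$ exactly says $\Sigma$ is valid and $\emptyset\vdash_{\sbnmvla}\Sigma$ exactly says $\Sigma$ is derivable). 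There is no circularity here, since the completeness of $\nmvla$ is proved before the appendix and does not mention the dual systems.

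The crux is the semantic step, which unwinds the definitions cleanly. Let $v$ be any valuation and suppose it satisfies the antecedent, i.e. $v(\ast(\varphi_1,\ldots,\varphi_\ell)) = v_k$; if not, $v$ satisfies~(\ref{eq: dual axiom}) vacuously. Since $v$ is a valuation, $v(\ast(\varphi_1,\ldots,\varphi_\ell)) \in \ast(v(\varphi_1),\ldots,v(\varphi_\ell))$, so writing $v(\varphi_i)=v_{m_i}$ we get $v_k \in \ast(v_{m_1},\ldots,v_{m_\ell})$. By the definition~(\ref{eq: d -1}) of $(\ast(\varphi_1,\ldots,\varphi_\ell)\times k)^{-1}$, the set $\{(\varphi_1,m_1),\ldots,(\varphi_\ell,m_\ell)\}$ is one of $\Theta_1,\ldots,\Theta_{\sk}$, say $\Theta_{q_0}$, whence $m_i = k_{i,q_0}$ for all $i$. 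Because the element of $\bigvee(\ast(\varphi_1,\ldots,\varphi_\ell)\times k)^{-1}$ appearing in the succedent contains exactly one labelled formula from each $\Theta_q$, in particular it contains $(\varphi_{j_{q_0}},k_{j_{q_0},q_0})$, and $v(\varphi_{j_{q_0}}) = v_{m_{j_{q_0}}} = v_{k_{j_{q_0},q_0}}$. Thus $v$ satisfies a succedent formula, so it satisfies~(\ref{eq: dual axiom}) in the sense of Definition~\ref{d: satisfiability}. The only point requiring care is matching the right preimage tuple $\Theta_{q_0}$ with the correspondingly chosen coordinate $j_{q_0}$; once that alignment is made the rest is immediate.

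Alternatively, one may give a direct derivation in $\nmvla$ from the table axioms~(\ref{eq: table axiom}) and the structural rules, as in~\cite[Proposition~5.3]{KaminskiF21}: for each preimage tuple $\Theta_q$ the corresponding table axiom carries $(\ast(\varphi_1,\ldots,\varphi_\ell),k)$ in its succedent (since $v_k\in\ast(v_{k_{1,q}},\ldots,v_{k_{\ell,q}})$ by the very definition of $\Theta_q$), and one assembles~(\ref{eq: dual axiom}) by combining these $\sk$ axioms with the shifts~(\ref{eq: ls}),~(\ref{eq: rs}) and resolutions~(\ref{eq: resolution}) so that exactly the chosen formula $(\varphi_{j_q},k_{j_q,q})$ from each $\Theta_q$ survives on the succedent. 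I expect the main obstacle of this route to be purely combinatorial bookkeeping: tracking the transversal structure of $\bigvee(\ast(\varphi_1,\ldots,\varphi_\ell)\times k)^{-1}$ while the set-valued (non-deterministic) outputs force several labels $(\ast(\varphi_1,\ldots,\varphi_\ell),k^\prime)$ to occur simultaneously in each table-axiom succedent, with overlaps between distinct preimage tuples. This overlap is precisely where the non-deterministic case departs from the deterministic one, cf.\ the footnote in the proof of Theorem~\ref{t: nmvlr c-c elimination}. Since the semantic route above bypasses this bookkeeping entirely, that is the route I would adopt.
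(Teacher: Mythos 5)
Your main argument is correct, but it takes a genuinely different route from the paper's. The paper itself prints no proof here: it defers to the explicit syntactic derivation of \cite[Proposition~5.3]{KaminskiF21}, in which the dual axioms~(\ref{eq: dual axiom}) are derived inside the calculus from the table axioms~(\ref{eq: table axiom}) and the structural rules; the semantic reasoning you carry out appears in the paper only as the ``intuition''~(\ref{eq: and or}) following the proposition. What you do is make that intuition rigorous (the tuple of argument values must coincide with some preimage set $\Theta_{q_0}$, and every transversal in $\bigvee(\ast(\varphi_1,\ldots,\varphi_\ell)\times k)^{-1}$ contains an element of $\Theta_{q_0}$) and then convert validity into derivability via the completeness half of Theorem~\ref{t: c and s of nmvla} with $\bsigma=\emptyset$. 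This is legitimate and non-circular, since Theorem~\ref{t: c and s of nmvla} is proved in Section~\ref{s: nmvla} with no reference to the dual calculi of the appendix. The trade-off: your proof is short and bypasses all combinatorics, but it is nonconstructive (it rides on the Zorn's-lemma completeness argument) and exhibits no derivation, whereas the paper's route produces a concrete derivation, uniform in the truth tables, and hence has purely proof-theoretic content -- which matters in a paper whose point is that such calculi can be read off the tables directly.

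A caveat on your alternative syntactic sketch: as described, it would fail. The $\sk$ table axioms attached to the preimage tuples $\Theta_1,\ldots,\Theta_{\sk}$ do not suffice to derive~(\ref{eq: dual axiom}) by shifts and resolutions alone. All structural rules are sound for an arbitrary assignment of exactly one truth value to each formula, with no compositional constraint linking $\ast(\varphi_1,\ldots,\varphi_\ell)$ to its arguments; so it is enough to exhibit such an assignment satisfying the preimage axioms but falsifying~(\ref{eq: dual axiom}). For instance, with $n=2$ and $\ast$ binary, $\ast(v_1,v_1)=\ast(v_2,v_2)=\{v_1\}$ and $\ast(v_1,v_2)=\ast(v_2,v_1)=\{v_2\}$, take $k=1$ and the transversal $\{(\varphi_1,1),(\varphi_2,2)\}$: the assignment giving $\ast(\varphi_1,\varphi_2)$, $\varphi_1$, $\varphi_2$ the values $v_1$, $v_2$, $v_1$ satisfies both preimage table axioms vacuously yet falsifies~(\ref{eq: dual axiom}), so the latter is not derivable from the former by structural rules. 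A correct derivation must also use the table axioms of the \emph{non-preimage} entries, which via R-shifts~(\ref{eq: rs}) yield $(\ast(\varphi_1,\ldots,\varphi_\ell),k),(\varphi_1,m_1),\ldots,(\varphi_\ell,m_\ell)\rightarrow$ whenever $v_k\notin\ast(v_{m_1},\ldots,v_{m_\ell})$, combined with Proposition~\ref{p: all phi} and cuts to run through all value tuples of the arguments. Since you adopt the semantic route, this does not invalidate your proof; it only shows that the syntactic argument is more than the bookkeeping you describe.
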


The intuition lying behind Proposition~\ref{p: dual axiom}
is as follows.
Using $\bigvee$, $\bigwedge$, $\Longrightarrow$, and $\Longleftrightarrow$
as meta-connectives,
we see that
\begin{eqnarray}
\label{eq: and or}
v_k \in v(\ast(\varphi_1,\ldots,\varphi_\ell))
&
\Longrightarrow
&
\bigvee\limits_{v_k \in \ast(v_{k_1},\ldots,v_{k_\ell})}
\
\bigwedge\limits_{j=1}^\ell v(\varphi_j) = v_{k_j}
\\
\nonumber
&
\Longleftrightarrow
&
\bigwedge
\limits_{\Lambda \in \bigvee (\ast(\varphi_1,\ldots,\varphi_\ell),k)^{-1}}
\
\bigvee\limits_{(\varphi_j,k_j) \in \Lambda} v(\varphi_j) = v_{k_j}
\end{eqnarray}
see~(\ref{eq: d -1}) and~ (\ref{eq: enumeration}).
Now,
writing $(\varphi,k)$ for $v(\varphi) = v_k$,
we rewrite~(\ref{eq: and or}) as
\begin{eqnarray}
\label{eq: and or implication}
\nonumber
(\ast(\varphi_1,\ldots,\varphi_\ell) , k)
&
\Longrightarrow
&
\bigvee\limits_{v_k \in \ast(v_{k_1},\ldots,v_{k_\ell})}
\
\bigwedge\limits_{j=1}^\ell (\varphi_j,k_j)
\\
\label{eq: and or equivalence}
\nonumber
&
\Longleftrightarrow
&
\bigwedge
\limits_{\Lambda \in \bigvee (\ast(\varphi_1,\ldots,\varphi_\ell),k)^{-1}}
\
\bigvee\limits_{(\varphi_j,k_j) \in \Lambda} (\varphi_j,k_j)
\end{eqnarray}

Then, the succeedent of~(\ref{eq: dual axiom}) comes from
rewriting~(\ref{eq: and or}) as conjunction of disjunctions.

\eject

The calculus $\nmvladd$ results from $\nmvla$ by replacing
axioms~(\ref{eq: table axiom})
with the (distributively dual) axioms~(\ref{eq: dual axiom}),
cf.~\cite{Roussenau67}.

\begin{theorem}
\label{t: c and s of nmvladd}
{(Soundness and completeness of $\nmvladd$,
cf.~\cite[Theorem~5.5]{KaminskiF21}.)}
Let $\bsigma$ and $\Sigma$ be a set of sequents and
a sequent, respectively.
Then $\bsigma \vdash_{\sbnmvladd} \Sigma$
if and only if
$\bsigma \models \Sigma$.
\end{theorem}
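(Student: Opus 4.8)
The plan is to reduce the theorem to the soundness and completeness of $\nmvla$ (Theorem~\ref{t: c and s of nmvla}) by establishing that $\nmvladd$ and $\nmvla$ are deductively equivalent, i.e.\ that $\bsigma \vdash_{\sbnmvladd} \Sigma$ if and only if $\bsigma \vdash_{\sbnmvla} \Sigma$, in exact analogy with the way Corollary~\ref{c: c and s of nmvlr} was obtained from Proposition~\ref{p: equivalence}. Since $\nmvladd$ and $\nmvla$ share the ordinary axioms~(\ref{eq: ordinary axiom}) and all the structural rules~(\ref{eq: ls})--(\ref{eq: resolution}) (Remark~\ref{r: all calculi}), and differ only in that $\nmvla$ has the table axioms~(\ref{eq: table axiom}) whereas $\nmvladd$ has the dual axioms~(\ref{eq: dual axiom}), the equivalence reduces to showing that each calculus's defining axioms are derivable in the other.

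One direction is immediate: the dual axioms~(\ref{eq: dual axiom}) are derivable in $\nmvla$ by Proposition~\ref{p: dual axiom}, so every $\nmvladd$-derivation is simulated in $\nmvla$ and hence $\bsigma \vdash_{\sbnmvladd} \Sigma$ implies $\bsigma \vdash_{\sbnmvla} \Sigma$. For the converse it suffices to derive the table axioms~(\ref{eq: table axiom}) in $\nmvladd$; in the form of Remark~\ref{r: notation} these are the sequents $\Theta \rightarrow \{ \ast(\varphi_1,\ldots,\varphi_\ell) \} \times K$ with $\Theta = \{ (\varphi_1,k_1),\ldots,(\varphi_\ell,k_\ell) \}$ and $K = \{ k : v_k \in \ast(v_{k_1},\ldots,v_{k_\ell}) \}$. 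I would start from the derivable sequent $\rightarrow \{ \ast(\varphi_1,\ldots,\varphi_\ell) \} \times \{ 1,\ldots,n \}$ of Proposition~\ref{p: all phi}, weaken its antecedent by $\Theta$, and then cut away, one at a time, every labelled formula $(\ast(\varphi_1,\ldots,\varphi_\ell),k')$ with $k' \notin K$, leaving exactly $\Theta \rightarrow \{ \ast(\varphi_1,\ldots,\varphi_\ell) \} \times K$.

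To justify each such cut I need the auxiliary sequent $\Theta,(\ast(\varphi_1,\ldots,\varphi_\ell),k') \rightarrow$, and this I would obtain from a dual axiom~(\ref{eq: dual axiom}) for the value $k'$. Since $v_{k'} \notin \ast(v_{k_1},\ldots,v_{k_\ell})$, the set $\Theta$ is not among the preimages $\Theta_1,\ldots,\Theta_{\sk}$ constituting $(\ast(\varphi_1,\ldots,\varphi_\ell) \times k')^{-1}$, so each $\Theta_q$ differs from $\Theta$ in at least one position. Choosing from each $\Theta_q$ a labelled formula $(\varphi_{j_q},k_{j_q,q})$ with $k_{j_q,q} \neq k_{j_q}$ yields an element $\Lambda \in \bigvee(\ast(\varphi_1,\ldots,\varphi_\ell) \times k')^{-1}$ all of whose entries disagree with $\Theta$; feeding $(\ast(\varphi_1,\ldots,\varphi_\ell),k') \rightarrow \Lambda$ through one $k_{j_q,q},k_{j_q}$-R-shift~(\ref{eq: rs}) per entry moves each $(\varphi_{j_q},k_{j_q,q})$ into the antecedent as $(\varphi_{j_q},k_{j_q})$, and a final round of L-weakenings~(\ref{eq: lw}) supplies the remaining components of $\Theta$.

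The main obstacle is precisely this combinatorial selection step: one must verify that, because $\Theta$ is not a preimage of $k'$ and every $\Theta_q$ contains exactly one labelled formula per $\varphi_j$, a coordinate on which $\Theta_q$ disagrees with $\Theta$ can be selected from each $\Theta_q$ simultaneously, so that the resulting $\Lambda$ is a genuine member of $\bigvee(\ast(\varphi_1,\ldots,\varphi_\ell) \times k')^{-1}$ with every entry eligible for its R-shift. Once the table axioms are derived in $\nmvladd$, $\bsigma \vdash_{\sbnmvla} \Sigma$ implies $\bsigma \vdash_{\sbnmvladd} \Sigma$, the two provability relations coincide, and substituting this equivalence into Theorem~\ref{t: c and s of nmvla} yields the claim. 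This argument is the distributively dual mirror of Proposition~\ref{p: dual axiom} and runs parallel to \cite[Theorem~5.5]{KaminskiF21}.
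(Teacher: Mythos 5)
Your proposal is correct, but note that the paper never writes out a proof of this theorem at all: it appears in the appendix with a pointer to \cite[Theorem~5.5]{KaminskiF21}, the paper declaring that the proofs for the dual calculi are very similar to their deterministic counterparts and omitting them. Your argument supplies that missing content, and it does so along the lines the paper's own setup suggests: deductive equivalence of $\nmvladd$ and $\nmvla$ (one direction from Proposition~\ref{p: dual axiom}, the other by deriving the table axioms~(\ref{eq: table axiom}) inside $\nmvladd$), followed by an appeal to Theorem~\ref{t: c and s of nmvla}, exactly parallel to how Corollary~\ref{c: c and s of nmvlr} is obtained from Proposition~\ref{p: equivalence}. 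The benefit of your route is that it is self-contained within this paper, needing nothing from \cite{KaminskiF21}. Your key derivation is sound: for $k' \notin K$, every $\Theta_q \in (\ast(\varphi_1,\ldots,\varphi_\ell) \times k')^{-1}$ arises from a tuple $(k_{1,q},\ldots,k_{\ell,q})$ with $v_{k'} \in \ast(v_{k_{1,q}},\ldots,v_{k_{\ell,q}})$, hence differing from $(k_1,\ldots,k_\ell)$ in some coordinate $j_q$; the chosen $(\varphi_{j_q},k_{j_q,q})$ then form a legitimate member $\Lambda$ of $\bigvee (\ast(\varphi_1,\ldots,\varphi_\ell) \times k')^{-1}$, and each $k_{j_q,q},k_{j_q}$-R-shift~(\ref{eq: rs}) applies precisely because $k_{j_q,q} \neq k_{j_q}$, after which L-weakenings, R-weakenings and one cut per $k' \notin K$ against $\rightarrow \{\ast(\varphi_1,\ldots,\varphi_\ell)\} \times \{1,\ldots,n\}$ (Proposition~\ref{p: all phi}, which uses only structural rules and so is available in $\nmvladd$ by Remark~\ref{r: all calculi}) finish the job. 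Two points you flag as the ``main obstacle'' deserve to be stated explicitly but are indeed unproblematic: the disagreement between $\Theta$ and $\Theta_q$ must be read coordinatewise on tuples rather than as set inequality (with repeated formulas among the $\varphi_j$ the sets could coincide while the tuples differ, and your shift argument is in fact coordinatewise, so it survives this), and when the preimage of $k'$ is empty the dual axiom degenerates to $(\ast(\varphi_1,\ldots,\varphi_\ell),k') \rightarrow$ with empty succeedent, which makes the corresponding cut immediate rather than impossible.
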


\subsubsection{Distributively dual rules of inference}
\label{sas: dd rules}

The distributively dual calculus $\nmvlrdd$ results from $\nmvlr$
by replacing rules of inference~(\ref{eq: table rule s})
with the rule
\begin{equation}
\label{eq: table rule sp}
\frac
{\textstyle
\Gamma \rightarrow \Delta , \Lambda , \
\Lambda \in \bigvee (\ast(\varphi_1,\ldots,\varphi_\ell),K)^{-1}
}
{\textstyle
\Gamma \rightarrow \Delta ,
\{ \ast(\varphi_1,\ldots,\varphi_\ell) \} \times K }
\end{equation}

\begin{theorem}
\label{t: c and s of nmvlrdd}
{(Soundness and completeness of $\nmvlrdd$,
cf.~\cite[Theorem~5.7]{KaminskiF21}.)}
Let $\bsigma$ and $\Sigma$ be a set of sequents and
a sequent, respectively.
Then $\bsigma \vdash_{\sbnmvlrdd} \Sigma$
if and only if
$\bsigma \models \Sigma$.
\end{theorem}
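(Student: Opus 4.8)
The plan is to transfer soundness and completeness from an already-established calculus, as was done for $\nmvlr$ in Corollary~\ref{c: c and s of nmvlr}, rather than to repeat the Lindenbaum-style construction of Theorem~\ref{t: c and s of nmvla}. By Remark~\ref{r: all calculi}, $\nmvlrdd$ shares the ordinary axioms~(\ref{eq: ordinary axiom}) and all the structural rules~(\ref{eq: ls})--(\ref{eq: resolution}) with $\nmvla$; the sole new ingredient is the distributively dual rule~(\ref{eq: table rule sp}). I would therefore prove the two implications separately, each resting on a single observation about~(\ref{eq: table rule sp}).

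For the implication $\bsigma \vdash_{\sbnmvlrdd} \Sigma \Rightarrow \bsigma \models \Sigma$, soundness of the shared axioms and structural rules is the soundness half of Theorem~\ref{t: c and s of nmvla}, so it remains only to check that~(\ref{eq: table rule sp}) preserves satisfiability. Fix a valuation $v$ satisfying every premise $\Gamma \rightarrow \Delta, \Lambda$, assume $v$ satisfies all of $\Gamma$ and none of $\Delta$, and set $v(\varphi_j) = v_{m_j}$; then for each transversal $\Lambda \in \bigvee(\ast(\varphi_1,\ldots,\varphi_\ell),K)^{-1}$ the valuation $v$ satisfies at least one labelled formula of $\Lambda$. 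The decisive combinatorial fact — this is exactly the distributive duality packaged by $\bigvee$ — is that $v$ meets every transversal if and only if $(m_1,\ldots,m_\ell)$ equals one of the generating sets $\Theta_q$ of $(\ast(\varphi_1,\ldots,\varphi_\ell) \times K)^{-1}$, the contrapositive being that if $(m_1,\ldots,m_\ell)$ matched no $\Theta_q$ one could select from each $\Theta_q$ a labelled formula that $v$ falsifies and so assemble a transversal that $v$ misses entirely. Matching some $\Theta_q$ forces $\ast(v_{m_1},\ldots,v_{m_\ell}) = \{v_k : k \in K\}$, hence $v(\ast(\varphi_1,\ldots,\varphi_\ell)) = v_k$ for some $k \in K$, and $v$ satisfies the conclusion $\Gamma \rightarrow \Delta, \{\ast(\varphi_1,\ldots,\varphi_\ell)\} \times K$.

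For the converse $\bsigma \models \Sigma \Rightarrow \bsigma \vdash_{\sbnmvlrdd} \Sigma$, I would not argue semantically at all but show that the table axioms~(\ref{eq: table axiom}) of $\nmvla$ are derivable in $\nmvlrdd$; granting this, every $\nmvla$-derivation lifts to $\nmvlrdd$ by replacing each table-axiom leaf with its $\nmvlrdd$-derivation, so $\bsigma \models \Sigma$ gives $\bsigma \vdash_{\sbnmvla} \Sigma$ by the completeness half of Theorem~\ref{t: c and s of nmvla} and hence $\bsigma \vdash_{\sbnmvlrdd} \Sigma$. To derive~(\ref{eq: table axiom}) one applies~(\ref{eq: table rule sp}) exactly once, with antecedent $\{(\varphi_1,k_1),\ldots,(\varphi_\ell,k_\ell)\}$, empty $\Delta$, and $K$ chosen so that $\ast(v_{k_1},\ldots,v_{k_\ell}) = \{v_k : k \in K\}$; its succeedent $\{\ast(\varphi_1,\ldots,\varphi_\ell)\} \times K$ then coincides with that of~(\ref{eq: table axiom}). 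The premises are all trivial because $\{(\varphi_1,k_1),\ldots,(\varphi_\ell,k_\ell)\}$ is itself one of the generating sets $\Theta_q$, so every transversal $\Lambda$ contains some $(\varphi_j,k_j)$ already present in the antecedent, whence $\{(\varphi_1,k_1),\ldots,(\varphi_\ell,k_\ell)\} \rightarrow \Lambda$ follows from the ordinary axiom~(\ref{eq: ordinary axiom}) on $(\varphi_j,k_j)$ by weakenings.

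The step I expect to be the crux is the combinatorial fact in the soundness argument: this is the only point where non-determinism genuinely intervenes, since now $\ast(v_{k_1},\ldots,v_{k_\ell})$ may be a set $\{v_k : k \in K\}$ with several elements, so~(\ref{eq: table rule sp}) introduces the whole block $\{\ast(\varphi_1,\ldots,\varphi_\ell)\} \times K$ at once and its premises range over transversals of several generating sets. Turning ``$v$ meets every transversal'' into the explicit matching set $\Theta_q$ — that is, converting a conjunction of disjunctions into a disjunction of conjunctions — is where care is required; everything else, including the lifting of derivations and the handling of the structural rules, is routine by Remark~\ref{r: all calculi}. (The reference to \cite[Theorem~5.7]{KaminskiF21} suggests the authors instead route the argument through a deductive equivalence $\nmvlrdd \equiv \nmvladd$ together with Theorem~\ref{t: c and s of nmvladd}; the two routes differ only in which complete calculus serves as the bridge.)
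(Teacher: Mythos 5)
Your proposal is correct, but it follows a genuinely different route from the paper's. The paper does not spell out a proof of this theorem at all: as announced in Section~\ref{s: introduction}, the proofs for the dual calculi are omitted as being ``very similar to their counterparts in~\cite{KaminskiF21},'' and the citation of \cite[Theorem~5.7]{KaminskiF21}, together with the template of Proposition~\ref{p: equivalence} and Corollary~\ref{c: c and s of nmvlr}, makes clear that the intended argument is a purely syntactic deductive equivalence between $\nmvlrdd$ and $\nmvladd$, after which soundness and completeness transfer from Theorem~\ref{t: c and s of nmvladd}. You instead bridge through $\nmvla$ and split the two directions asymmetrically: for completeness you derive each table axiom~(\ref{eq: table axiom}) in $\nmvlrdd$ by a single application of~(\ref{eq: table rule sp}) with antecedent $\{(\varphi_1,k_1),\ldots,(\varphi_\ell,k_\ell)\}$ --- whose premises are weakened instances of~(\ref{eq: ordinary axiom}) precisely because that antecedent is itself a member of $(\ast(\varphi_1,\ldots,\varphi_\ell)\times K)^{-1}$, which is the content of Remark~\ref{r: notation} --- and then lift $\nmvla$-derivations leaf by leaf, which is legitimate since, by Remark~\ref{r: all calculi}, the two calculi share the axioms~(\ref{eq: ordinary axiom}) and all structural rules; for soundness you verify~(\ref{eq: table rule sp}) semantically, the crux being that if $v$ meets every transversal $\Lambda$ then the labelled formulas $(\varphi_j,m_j)$ determined by $v$ constitute one of the sets $\Theta_q$, whence $\ast(v_{m_1},\ldots,v_{m_\ell})=\{v_k : k\in K\}$ and $v$ satisfies the introduced block $\{\ast(\varphi_1,\ldots,\varphi_\ell)\}\times K$. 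That combinatorial step is exactly the conjunction-of-disjunctions versus disjunction-of-conjunctions duality the paper records in~(\ref{eq: and or}), and your argument for it (choose from each $\Theta_q$ a labelled formula falsified by $v$) is sound, granting the paper's own convention that each $\Theta_q$ contains exactly one labelled formula per argument position. As for what each route buys: yours is self-contained within this note, needing neither $\nmvladd$ nor its likewise-omitted Theorem~\ref{t: c and s of nmvladd}, at the price of consulting the semantics a second time (the paper consults it only once, for $\nmvla$, and keeps everything else syntactic); the paper's route yields as a by-product the deductive equivalence of $\nmvlrdd$ with the other calculi, which is of independent interest in the appendix and which your argument recovers only indirectly, via completeness.
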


\begin{theorem}
\label{t: nmvlrdd c-c elimination}
{(Cf.~\cite[Theorem~5.8]{KaminskiF21}.)}
Each $\nmvlrdd$-derivable sequent is
derivable without cut or resolution.
\end{theorem}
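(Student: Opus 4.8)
\noindent The plan is to reproduce the argument of Theorem~\ref{t: nmvlr c-c elimination} (equivalently, of \cite[Theorem~5.8]{KaminskiF21}) by a double induction, the outer on the length of the derivation and the inner on the complexity of the principal formula of the first cut or resolution, which we eliminate. Since $\nmvlrdd$ and $\nmvlr$ share all their structural rules and differ only in that the table rules~(\ref{eq: table rule s}) are replaced by the single rule~(\ref{eq: table rule sp}), both the outer induction and the basis of the inner induction are untouched by the new table rule and run exactly as the corresponding parts in \cite{KaminskiF21}. Consequently, only the inner induction step, in which the principal formula is a compound formula $\ast(\varphi_1,\ldots,\varphi_\ell)$, calls for a fresh argument.

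As in $\nmvlr$, the formula $\ast(\varphi_1,\ldots,\varphi_\ell)$ can be introduced as a principal formula only into the succeedent, and only by~(\ref{eq: table rule sp}); hence the relevant case is a $k^\prime,k^{\prime\prime}$-resolution~(\ref{eq: resolution}) whose two premises are conclusions of two instances of~(\ref{eq: table rule sp}), say with parameter sets $K^\prime$ and $K^{\prime\prime}$, so that $k^\prime \in K^\prime$, $k^{\prime\prime} \in K^{\prime\prime}$, and the resolvent is $\Gamma \rightarrow \Delta, \{ \ast(\varphi_1,\ldots,\varphi_\ell) \} \times ((K^\prime \setminus \{ k^\prime \}) \cup (K^{\prime\prime} \setminus \{ k^{\prime\prime} \}))$. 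I would split on whether $K^\prime = K^{\prime\prime}$. If $K^\prime = K^{\prime\prime} = K$, the two instances of~(\ref{eq: table rule sp}) have the same premises and the same conclusion $\Gamma \rightarrow \Delta, \{ \ast(\varphi_1,\ldots,\varphi_\ell) \} \times K$, and, since $k^\prime \neq k^{\prime\prime}$ forces $(K \setminus \{ k^\prime \}) \cup (K \setminus \{ k^{\prime\prime} \}) = K$, the resolvent coincides with this common conclusion. The whole resolution can therefore be discarded and replaced by a single application of~(\ref{eq: table rule sp}) to its (already cut/resolution-free) premises. This is the only genuinely new case, and it is the precise analogue of the equality subcase $(k_1^\prime,\ldots,k_\ell^\prime) = (k_1^{\prime\prime},\ldots,k_\ell^{\prime\prime})$ singled out in the footnote to the proof of Theorem~\ref{t: nmvlr c-c elimination}.

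If $K^\prime \neq K^{\prime\prime}$, then $\ast$ takes distinct values on the two families of tuples recorded in $(\ast(\varphi_1,\ldots,\varphi_\ell) \times K^\prime)^{-1}$ and $(\ast(\varphi_1,\ldots,\varphi_\ell) \times K^{\prime\prime})^{-1}$, so no tuple lies in both and any tuple of the first family differs from any tuple of the second in at least one coordinate. I would then push the resolution upward to that coordinate: from the transversal premises $\Gamma \rightarrow \Delta, \Lambda$ of the two instances of~(\ref{eq: table rule sp}) one resolves on the strictly simpler formulas $(\varphi_j,\cdot)$ occurring in them, applies the inner induction hypothesis to remove these resolutions, and reassembles the required succeedent by re-applying~(\ref{eq: table rule sp}) followed by R-weakenings. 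This is the distributive $\bigvee$-bookkeeping of the dual calculus and is carried out exactly as in \cite[Theorem~5.8]{KaminskiF21}.

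The step I expect to be the main obstacle is this last, inequality, case: one has to verify that the distributive expansion encoded by $\bigvee(\ast(\varphi_1,\ldots,\varphi_\ell),K)^{-1}$ indeed supplies, among the premises of the two table-rule instances, transversals agreeing on all but one coordinate, so that the resolution on $\ast(\varphi_1,\ldots,\varphi_\ell)$ can be traded for resolutions on the simpler $\varphi_j$, and that re-applying~(\ref{eq: table rule sp}) together with weakenings rebuilds the resolvent $\Gamma \rightarrow \Delta, \{ \ast(\varphi_1,\ldots,\varphi_\ell) \} \times ((K^\prime \setminus \{ k^\prime \}) \cup (K^{\prime\prime} \setminus \{ k^{\prime\prime} \}))$ and nothing beyond it. As this computation is already performed in \cite{KaminskiF21}, the only substantive addition for the non-deterministic setting is the equality case above, which is why the detailed proof may be omitted.
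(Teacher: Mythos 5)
Your proposal follows essentially the same route as the paper, which omits this proof altogether, deferring to \cite[Theorem~5.8]{KaminskiF21} and noting (in the footnote to the proof of Theorem~\ref{t: nmvlr c-c elimination}) that the equality case is the only modification the non-deterministic setting requires: your case $K^\prime = K^{\prime\prime}$ is exactly that modification, handled correctly, and your reduction of the case $K^\prime \neq K^{\prime\prime}$ to the deterministic resolution cascade of \cite{KaminskiF21}, justified by the disjointness of the two tuple families, is the intended argument. One small inaccuracy: once the cascade of resolutions on the simpler $\varphi_j$ yields $\Gamma \rightarrow \Delta$, R-weakenings alone rebuild the resolvent, so ``re-applying~(\ref{eq: table rule sp})'' is neither needed nor in general possible, since $(K^\prime \setminus \{ k^\prime \}) \cup (K^{\prime\prime} \setminus \{ k^{\prime\prime} \})$ need not be the value-set of any table entry.
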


\subsection{Sequent duality}
\label{as: sd}

One can think of $\nmvlr$ and $\nmvlrdd$ as sequent calculi with
rules of introduction to the succeedent only.
In this section we replace these rules with
the dual rules of introduction to the antecedent and
show the dual calculi posses all properties of the original ones.

\medskip
For cut/resolution elimination in the calculi
in this section we need the following rule of inference.

\medskip
\par \noindent
$K$-L-multi-shift, $K \subset \{ 1,\ldots,n \}$,
\begin{equation}
\label{eq: jls}
\frac
{\textstyle
\Gamma , (\varphi , k) \rightarrow \Delta , \ k \in K}
{\textstyle
\Gamma \rightarrow \Delta , \{ \varphi \} \times \overline{K}}
\end{equation}

\begin{remark}
\label{r: m shift}
This rule is derivable from its premises and~(\ref{eq: all phi}) by cuts.
However, we do not call it ``cut,''
because it does not affect the subformula property\footnote{
\label{f: subformula}
We say that a labelled formula $(\varphi,k)$ is a subformula
of a labelled formula $(\varphi^\prime , k^\prime)$,
if $\varphi$ is a subformula of $\varphi^\prime$.}:
since $K$ is a proper subset of $\{ 1,\ldots,n \}$,
$\overline{K} \neq \emptyset$.
\end{remark}

\subsubsection{The sequent dual of $\nmvlr$}
\label{sas: nmvlrsd}

The (dual) rules for introduction of $\ast$ to the antecedent are
\begin{equation}
\label{eq: table rule a}
\frac
{\textstyle
\Gamma ,(\varphi_1 , k_1),\ldots,(\varphi_\ell , k_\ell)
\rightarrow \Delta
, \
v_k \in \ast(v_{k_1},\ldots,v_{k_\ell})}
{\textstyle
\Gamma , (\ast(\varphi_1,\ldots,\varphi_\ell) , k) \rightarrow \Delta}
\end{equation}
for each $k = 1,\ldots,n$,
cf.~\cite[Proof of Theorem~8]{Kaminski88}.

\medskip
The proofs of Propositions~\ref{p: antecedent}
and~\ref{p: antecedent dual axiom}
and
Corollary~\ref{c: r rsd equivalence} below are like
the corresponding proofs of~\cite[Proposition~5.11]{KaminskiF21},
\cite[Proposition~5.12]{KaminskiF21}),
and~\cite[Corollary 5.13]{KaminskiF21}.

\begin{proposition}
\label{p: antecedent}
Rules~{ (\ref{eq: table rule a})} are derivable in $\nmvlr$.
\end{proposition}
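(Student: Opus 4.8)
The plan is to establish derivability of rules~(\ref{eq: table rule a}) by a complete case analysis on the truth values of the immediate subformulas $\varphi_1,\ldots,\varphi_\ell$, discharging each case either by one of the premises of the rule or by a table axiom. By Proposition~\ref{p: equivalence} the calculi $\nmvlr$ and $\nmvla$ derive exactly the same sequents, so throughout the derivation I may freely use the table axioms~(\ref{eq: table axiom}), which are $\nmvlr$-derivable, together with the sequent~(\ref{eq: all phi}) supplied by Proposition~\ref{p: all phi}.

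First I would set up the case split. From $\rightarrow\{\varphi_1\}\times\{1,\ldots,n\}$ (Proposition~\ref{p: all phi}), a weakening, and $n$ applications of cut~(\ref{eq: cut}) on the labelled formulas $(\varphi_1,1),\ldots,(\varphi_1,n)$, the conclusion $\Gamma,(\ast(\varphi_1,\ldots,\varphi_\ell),k)\rightarrow\Delta$ follows as soon as I have, for each $m_1=1,\ldots,n$, the sequent $\Gamma,(\ast(\varphi_1,\ldots,\varphi_\ell),k),(\varphi_1,m_1)\rightarrow\Delta$. Iterating the same device for $\varphi_2,\ldots,\varphi_\ell$ reduces the goal to deriving
\[
\Gamma,(\ast(\varphi_1,\ldots,\varphi_\ell),k),(\varphi_1,m_1),\ldots,(\varphi_\ell,m_\ell)\rightarrow\Delta
\]
for every tuple $(m_1,\ldots,m_\ell)\in\{1,\ldots,n\}^\ell$.

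Next I would dispatch each such tuple by inspecting the truth table. If $v_k\in\ast(v_{m_1},\ldots,v_{m_\ell})$, then $(m_1,\ldots,m_\ell)$ is one of the index tuples indexing the premises of~(\ref{eq: table rule a}), so $\Gamma,(\varphi_1,m_1),\ldots,(\varphi_\ell,m_\ell)\rightarrow\Delta$ is a premise, and a single $k$-L-weakening~(\ref{eq: lw}) inserting $(\ast(\varphi_1,\ldots,\varphi_\ell),k)$ into its antecedent yields the desired sequent. If instead $v_k\notin\ast(v_{m_1},\ldots,v_{m_\ell})$, I would start from the table axiom~(\ref{eq: table axiom}) for this tuple; its succeedent $\{(\ast(\varphi_1,\ldots,\varphi_\ell),k'):v_{k'}\in\ast(v_{m_1},\ldots,v_{m_\ell})\}$ is nonempty (truth tables have no empty entries) and, crucially, contains only labels $k'\neq k$. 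Applying a $k',k$-R-shift~(\ref{eq: rs}) to each of these succeedent formulas~-- legitimate exactly because $k'\neq k$~-- removes it from the succeedent while depositing $(\ast(\varphi_1,\ldots,\varphi_\ell),k)$ into the antecedent, and after exhausting the succeedent I am left with $(\varphi_1,m_1),\ldots,(\varphi_\ell,m_\ell),(\ast(\varphi_1,\ldots,\varphi_\ell),k)\rightarrow$; weakenings~(\ref{eq: lw}) and~(\ref{eq: rw}) then restore $\Gamma$ and $\Delta$.

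I expect the one genuinely non-deterministic point to be this last step: since $\ast(v_{m_1},\ldots,v_{m_\ell})$ is a set rather than a single value, the table axiom may carry several labelled formulas $(\ast(\varphi_1,\ldots,\varphi_\ell),k')$ in its succeedent, so the R-shift must be iterated over all of them, each fresh copy of $(\ast(\varphi_1,\ldots,\varphi_\ell),k)$ simply being absorbed into the antecedent set. The remaining work~-- the nested cuts of the case split~-- is routine bookkeeping, and by Proposition~\ref{p: equivalence} the entire derivation is a $\nmvlr$-derivation.
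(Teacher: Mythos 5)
Your proof is correct and is essentially the derivation the paper intends: the paper omits the proof, deferring to the deterministic analogue \cite[Proposition~5.11]{KaminskiF21}, and your reconstruction~-- a case analysis over all value tuples via cuts on~(\ref{eq: all phi}), discharging tuples with $v_k \in \ast(v_{m_1},\ldots,v_{m_\ell})$ by the rule's premises plus weakening, and the remaining tuples by the table axioms~(\ref{eq: table axiom}) followed by R-shifts~(\ref{eq: rs}) and weakenings~-- follows that same pattern. Your closing observation is also exactly the right one: the only genuinely non-deterministic adjustment is that the table axiom now carries a whole set of succeedent formulas $(\ast(\varphi_1,\ldots,\varphi_\ell),k')$, all with $k' \neq k$ in this case, so the R-shift is iterated and the resulting copies of $(\ast(\varphi_1,\ldots,\varphi_\ell),k)$ merge in the antecedent set.
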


\begin{proposition}
\label{p: antecedent dual axiom}
Axioms~{(\ref{eq: dual axiom})} are derivable
by~{\em (\ref{eq: table rule a})}.
\end{proposition}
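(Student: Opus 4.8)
The plan is to obtain each dual axiom~(\ref{eq: dual axiom}) as the conclusion of a \emph{single} application of rule~(\ref{eq: table rule a}), with the antecedent context $\Gamma$ empty and the succeedent $\Delta$ taken to be the entire right-hand side of~(\ref{eq: dual axiom}). So fix $k$ and fix a choice set
\[
\Lambda = \{(\varphi_{j_1} , k_{j_1,1}) ,\ldots, (\varphi_{j_{\sk}} , k_{j_{\sk},\sk})\}
\in \bigvee(\ast(\varphi_1,\ldots,\varphi_\ell) \times k)^{-1} .
\]
Instantiating~(\ref{eq: table rule a}) with $\Gamma = \emptyset$ and $\Delta = \Lambda$, the premises are exactly the sequents $(\varphi_1 , k_1),\ldots,(\varphi_\ell , k_\ell) \rightarrow \Lambda$ ranging over all tuples $(k_1,\ldots,k_\ell)$ with $v_k \in \ast(v_{k_1},\ldots,v_{k_\ell})$. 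By the definition of $(\ast(\varphi_1,\ldots,\varphi_\ell) \times k)^{-1}$ together with the enumeration~(\ref{eq: gamma q}), these antecedent sets are precisely $\Theta_1,\ldots,\Theta_{\sk}$. Hence it suffices to derive $\Theta_q \rightarrow \Lambda$ for every $q = 1,\ldots,\sk$, after which the rule delivers the conclusion $(\ast(\varphi_1,\ldots,\varphi_\ell) , k) \rightarrow \Lambda$, which is the axiom in question.

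Next I would derive each premise $\Theta_q \rightarrow \Lambda$ from an ordinary axiom~(\ref{eq: ordinary axiom}) by weakenings alone. The crucial point is the definition of $\bigvee$: by construction $\Lambda$ contains exactly one labelled formula drawn from each $\Theta_q$, namely $(\varphi_{j_q} , k_{j_q,q}) \in \Theta_q$. Starting from the identity axiom $(\varphi_{j_q} , k_{j_q,q}) \rightarrow (\varphi_{j_q} , k_{j_q,q})$, I would adjoin the remaining members of $\Theta_q$ to the antecedent by L-weakenings~(\ref{eq: lw}) and the remaining members of $\Lambda$ to the succeedent by R-weakenings~(\ref{eq: rw}), yielding $\Theta_q \rightarrow \Lambda$. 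This uses only the ordinary axioms and structural rules, which by Remark~\ref{r: all calculi} are available in the calculus under consideration; in particular no cut or resolution is invoked.

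The argument is essentially bookkeeping, and the only step needing care — the ``hard part,'' such as it is — is the matching of data in the first paragraph: one must verify that the premise family demanded by rule~(\ref{eq: table rule a}) for the chosen $k$ coincides with the list $\Theta_1,\ldots,\Theta_{\sk}$, and confirm directly from the definition of $\bigvee$ that every $\Theta_q$ shares a labelled formula with $\Lambda$, so that each premise is indeed a weakened instance of an identity. Granting this, the derivation closes immediately. This is exactly the structure of the corresponding argument for~\cite[Proposition~5.12]{KaminskiF21}.
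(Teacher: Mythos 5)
Your derivation is correct and is essentially the intended one (the paper omits the proof, deferring to the corresponding argument for \cite[Proposition~5.12]{KaminskiF21}): with $\Gamma = \emptyset$ and $\Delta = \Lambda$, the premises demanded by rule~(\ref{eq: table rule a}) are exactly the sequents $\Theta_q \rightarrow \Lambda$, $q = 1,\ldots,\sk$, each of which shares the labelled formula $(\varphi_{j_q}, k_{j_q,q})$ between antecedent and succeedent and so follows from an identity axiom~(\ref{eq: ordinary axiom}) by L- and R-weakenings, after which a single application of the rule yields~(\ref{eq: dual axiom}). Your matching of the premise family with $\Theta_1,\ldots,\Theta_{\sk}$ and the observation that each $\Lambda \in \bigvee(\ast(\varphi_1,\ldots,\varphi_\ell) \times k)^{-1}$ picks one member from each $\Theta_q$ are precisely the bookkeeping the proof requires.
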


The sequent calculus $\nmvlrsd$ results from $\nmvlr$ by replacing
rules of introduction to succeedent~(\ref{eq: table rule s})
with rules of introduction to antecedent (\ref{eq: table rule a}) and
adding to it the $K$-L-multi-shift (\ref{eq: jls}).

\begin{corollary}
\label{c: r rsd equivalence}
Calculi $\nmvlrsd$ and $\nmvlr$ are deductively equivalent.
\end{corollary}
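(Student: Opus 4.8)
The plan is to prove the two halves of the deductive equivalence separately, in each case by checking that the axioms and rules distinguishing one of the two calculi are derivable in the other. Since $\nmvlrsd$ and $\nmvlr$ share the ordinary axioms~(\ref{eq: ordinary axiom}) and all the structural rules~(\ref{eq: ls})--(\ref{eq: resolution}) (cf.\ Remark~\ref{r: all calculi}), only the table-handling machinery needs attention: $\nmvlr$ carries the succeedent rules~(\ref{eq: table rule s}), while $\nmvlrsd$ carries the antecedent rules~(\ref{eq: table rule a}) together with the $K$-L-multi-shift~(\ref{eq: jls}).

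To see that every sequent derivable in $\nmvlrsd$ from $\bsigma$ is derivable in $\nmvlr$ from $\bsigma$, I would verify that the two rules proper to $\nmvlrsd$ can be simulated in $\nmvlr$. Proposition~\ref{p: antecedent} already supplies the derivability of~(\ref{eq: table rule a}) in $\nmvlr$. For the multi-shift~(\ref{eq: jls}), Remark~\ref{r: m shift} gives a derivation of it from its premises and~(\ref{eq: all phi}) by cuts, and~(\ref{eq: all phi}) is available in $\nmvlr$ by Proposition~\ref{p: all phi}, whose derivation uses only the shared ordinary axioms and structural rules. Replacing each application of~(\ref{eq: table rule a}) and~(\ref{eq: jls}) in a $\nmvlrsd$-derivation by these $\nmvlr$-derivations yields the desired simulation.

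For the opposite direction I would route through the distributively dual calculus $\nmvladd$. By the soundness and completeness Theorems~\ref{t: c and s of nmvla} and~\ref{t: c and s of nmvladd}, both $\nmvla$ and $\nmvladd$ derive from $\bsigma$ exactly the sequents that $\bsigma$ semantically entails, so $\bsigma \vdash_{\sbnmvla} \Sigma$ if and only if $\bsigma \vdash_{\sbnmvladd} \Sigma$; combined with Proposition~\ref{p: equivalence} this makes $\nmvlr$ and $\nmvladd$ deductively equivalent. It therefore suffices to establish the inclusion $\nmvladd \subseteq \nmvlrsd$. But $\nmvladd$ adds to the shared base only the dual axioms~(\ref{eq: dual axiom}), and Proposition~\ref{p: antecedent dual axiom} states precisely that these are derivable by rule~(\ref{eq: table rule a}), which belongs to $\nmvlrsd$; hence each such axiom, and with it every $\nmvladd$-derivation, is available in $\nmvlrsd$.

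The delicate point is the status of the multi-shift~(\ref{eq: jls}): it is a primitive rule of $\nmvlrsd$ yet is only \emph{cut}-derivable in $\nmvlr$ through~(\ref{eq: all phi}), so the argument establishes deductive---not cut-free---equivalence, which is exactly what is claimed. A fully syntactic alternative for the inclusion of $\nmvlr$ in $\nmvlrsd$, bypassing the semantic detour through $\nmvladd$, is to derive the table axioms~(\ref{eq: table axiom}) directly inside $\nmvlrsd$: writing $\ast(v_{k_1},\ldots,v_{k_\ell}) = \{ v_k : k \in K \}$, for each $k' \in \overline{K}$ a single application of~(\ref{eq: table rule a}) produces $(\varphi_1,k_1),\ldots,(\varphi_\ell,k_\ell),(\ast(\varphi_1,\ldots,\varphi_\ell),k')\rightarrow$ with empty succeedent, every premise being derivable because it contains a clashing pair $(\varphi_j,k_j),(\varphi_j,m_j)$ with $k_j \neq m_j$ (obtained from~(\ref{eq: ordinary axiom}) by one $k_j,m_j$-R-shift~(\ref{eq: rs}) and weakenings); gathering these premises by the $\overline{K}$-L-multi-shift~(\ref{eq: jls}) then yields the table axiom, the degenerate case $K = \{1,\ldots,n\}$ being covered outright by~(\ref{eq: all phi}), after which rules~(\ref{eq: table rule s}) follow by $\ell$ cuts exactly as in the proof of Proposition~\ref{p: equivalence}.
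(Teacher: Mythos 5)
Your proof is correct and follows essentially the route the paper intends (the paper defers to the corresponding proof in \cite{KaminskiF21}): one direction by Proposition~\ref{p: antecedent} together with Remark~\ref{r: m shift} and Proposition~\ref{p: all phi} for the multi-shift, the other by Proposition~\ref{p: antecedent dual axiom} to embed $\nmvladd$ in $\nmvlrsd$ and then the soundness/completeness equivalences to connect back to $\nmvlr$. Your closing purely syntactic derivation of the table axioms~(\ref{eq: table axiom}) inside $\nmvlrsd$ (clashing pairs via R-shift, then the $\overline{K}$-L-multi-shift) is a correct and welcome extra that avoids the semantic detour, but it does not change the verdict.
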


\begin{theorem}
\label{t: nmvlrsd c-c elimination}
{(Cf.~\cite[Theorem~5.14]{KaminskiF21}.)}
Each $\nmvlrsd$-derivable sequent is
derivable without cut or resolution.
\end{theorem}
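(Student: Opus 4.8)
The plan is to mirror the proof of Theorem~\ref{t: nmvlr c-c elimination}, dualized so that the principal connective is introduced on the left rather than on the right. As there, I would argue by a double induction, the outer one on the length of the derivation and the inner one on the complexity of the cut/resolution formula, and eliminate the first application of $(\ref{eq: cut})$ or $(\ref{eq: resolution})$ whose premises are already free of cut and resolution. The outer induction and the basis of the inner induction do not involve the logical rule $(\ref{eq: table rule a})$, so they go through exactly as in the corresponding proofs of \cite{KaminskiF21} and of Theorem~\ref{t: nmvlr c-c elimination}. The one genuinely new bookkeeping point at this level is that I must also commute the cut/resolution past the new rules $(\ref{eq: ls})$ and $(\ref{eq: jls})$; these commutations are routine, and since $\overline{K} \neq \emptyset$ they do not disturb the subformula property, precisely as recorded in Remark~\ref{r: m shift}.

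For the induction step I would use that in $\nmvlrsd$ the formula $\ast(\varphi_1,\ldots,\varphi_\ell)$ can become principal only in the antecedent, via $(\ref{eq: table rule a})$. Dually to the resolution case of Theorem~\ref{t: nmvlr c-c elimination}, the critical configuration is therefore a cut $(\ref{eq: cut})$ on $(\ast(\varphi_1,\ldots,\varphi_\ell),k)$ whose right premise ends in an application of $(\ref{eq: table rule a})$ and whose left premise introduces the same labelled formula into the succeedent by a shift $(\ref{eq: ls})$ or $(\ref{eq: jls})$ with $k \in \overline{K}$ (this is exactly the configuration the rule $(\ref{eq: jls})$ was added to handle). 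The shift supplies premises $\Gamma,(\ast(\varphi_1,\ldots,\varphi_\ell),k') \rightarrow \Delta$ for $k'$ ranging over $K$, while $(\ref{eq: table rule a})$ supplies premises $\Gamma,(\varphi_1,k_1),\ldots,(\varphi_\ell,k_\ell) \rightarrow \Delta$ for every tuple with $v_k \in \ast(v_{k_1},\ldots,v_{k_\ell})$. The reduction I would carry out merges these two families so as to recover, for \emph{every} value tuple $(m_1,\ldots,m_\ell)$, the sequent $\Gamma,(\varphi_1,m_1),\ldots,(\varphi_\ell,m_\ell) \rightarrow \Delta$; the cut on $\ast(\varphi_1,\ldots,\varphi_\ell)$ is then discharged by collapsing each subformula $\varphi_j$ over its labels, using Proposition~\ref{p: all phi} and cuts/resolutions on the $\varphi_j$, together with applications of $(\ref{eq: jls})$. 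Every cut or resolution so introduced is on a formula of strictly smaller complexity, so the inner induction hypothesis applies; crucially, the residual use of $(\ref{eq: jls})$ in place of a further cut is what keeps the procedure inside the subformula property, cf.\ Remark~\ref{r: m shift}. As in Theorem~\ref{t: nmvlr c-c elimination}, the sole modification forced by non-determinism is a degenerate case: when the tuple extracted from $(\ref{eq: table rule a})$ already agrees with the data coming from the shift, no genuine reduction is needed and the conclusion is obtained from one of the premises directly, dually to the equality case $(k_1',\ldots,k_\ell') = (k_1'',\ldots,k_\ell'')$ there.

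The main obstacle is the asymmetry of this critical configuration. Unlike the resolution case of Theorem~\ref{t: nmvlr c-c elimination}, where both premises are instances of $(\ref{eq: table rule s})$ and decompose symmetrically into the $\varphi_j$, here only the right premise (via $(\ref{eq: table rule a})$) decomposes, whereas the left premise (via $(\ref{eq: ls})$ or $(\ref{eq: jls})$) keeps the cut formula at the level of $\ast(\varphi_1,\ldots,\varphi_\ell)$. The delicate step is therefore to show that the shift premises, ranging over the labels in $K$, can be merged with the table premises so as to cover every value tuple of $(\varphi_1,\ldots,\varphi_\ell)$ --- using that $\ast(v_{m_1},\ldots,v_{m_\ell}) \neq \emptyset$ forces some covered label for each tuple --- after which the connective is removed by $(\ref{eq: jls})$ and by cuts and resolutions on the $\varphi_j$ alone. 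Verifying this bookkeeping, in particular that no cut on $\ast(\varphi_1,\ldots,\varphi_\ell)$ itself is reintroduced and that $(\ref{eq: jls})$ is always applied with $\overline{K} \neq \emptyset$, is where the real work lies; the remaining structural permutations are as in \cite{KaminskiF21}, consistently with the deductive equivalence of $\nmvlrsd$ and $\nmvlr$ established in Corollary~\ref{c: r rsd equivalence}.
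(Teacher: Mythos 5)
Your scaffolding (double induction, the observation that in $\nmvlrsd$ a compound formula can become principal only in the antecedent via~(\ref{eq: table rule a}) and reaches the succedent only through~(\ref{eq: ls})/(\ref{eq: jls}), and the degenerate case forced by non-determinism) is the right frame, and it matches the route the paper points to via~\cite[Theorem~5.14]{KaminskiF21} (the paper itself omits this proof). But the step you defer as ``where the real work lies'' is not bookkeeping: as you have set it up, it provably cannot be carried out. In your critical configuration the available data are the shift premises $\Gamma , (\ast(\varphi_1,\ldots,\varphi_\ell),k^\prime) \rightarrow \Delta$, $k^\prime \in K$, and the table premises $\Gamma , (\varphi_1,m_1),\ldots,(\varphi_\ell,m_\ell) \rightarrow \Delta$ for the tuples with $v_k \in \ast(v_{m_1},\ldots,v_{m_\ell})$. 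For a tuple with $v_k \notin \ast(v_{m_1},\ldots,v_{m_\ell})$, the sequent $\Gamma , (\varphi_1,m_1),\ldots,(\varphi_\ell,m_\ell) \rightarrow \Delta$ you want to ``recover'' is in general not entailed by the table premises alone (a valuation giving each $\varphi_j$ the value $v_{m_j}$ satisfies all of them vacuously), so by soundness any derivation of it must actually use a shift premise. A shift premise, however, carries $(\ast(\varphi_1,\ldots,\varphi_\ell),k^\prime)$ in its antecedent, and no rule of $\nmvlrsd$ except cut or resolution can ever make such an occurrence disappear: (\ref{eq: ls}), (\ref{eq: rs}) and~(\ref{eq: jls}) only move it across the arrow under other labels (with nonempty residue, since $\overline{K} \neq \emptyset$), rule~(\ref{eq: table rule a}) embeds it into a still larger formula, and weakenings only add formulas. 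Tracing that occurrence down to the target sequent, which mentions only the $\varphi_j$, shows that a cut or resolution on a formula of complexity at least that of $\ast(\varphi_1,\ldots,\varphi_\ell)$ is unavoidable, contradicting your claim that ``every cut or resolution so introduced is on a formula of strictly smaller complexity.'' So the inner induction does not close; this is a genuine gap, not a verification left to the reader.

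A correct argument must use more than the two premise families as given sequents: either one analyzes the cut-free sub-derivations of the shift premises themselves (an inversion-style argument, which is what the appeal to~\cite{KaminskiF21} amounts to), or --- more economically, given what is already proved --- one transfers cut elimination from $\nmvlr$. Indeed, by Corollary~\ref{c: r rsd equivalence} every $\nmvlrsd$-derivable sequent is $\nmvlr$-derivable, hence by Theorem~\ref{t: nmvlr c-c elimination} it has a cut/resolution-free $\nmvlr$-derivation; and every application of~(\ref{eq: table rule s}) can be simulated cut-free in $\nmvlrsd$. Namely, put $S = \{ k : v_k \in \ast(v_{k_1},\ldots,v_{k_\ell}) \}$; for each $k^\prime \notin S$ and each tuple $(m_1,\ldots,m_\ell)$ with $v_{k^\prime} \in \ast(v_{m_1},\ldots,v_{m_\ell})$ we have $(m_1,\ldots,m_\ell) \neq (k_1,\ldots,k_\ell)$, so an R-shift~(\ref{eq: rs}) applied to the premise $\Gamma \rightarrow \Delta , (\varphi_{j_0},k_{j_0})$ at a coordinate $j_0$ where the tuples differ, followed by L-weakenings, yields $\Gamma , (\varphi_1,m_1),\ldots,(\varphi_\ell,m_\ell) \rightarrow \Delta$; then~(\ref{eq: table rule a}) yields $\Gamma , (\ast(\varphi_1,\ldots,\varphi_\ell),k^\prime) \rightarrow \Delta$ for every $k^\prime \in \overline{S}$, and the multi-shift~(\ref{eq: jls}) with $K = \overline{S}$ (legitimate, since $S \neq \emptyset$) gives $\Gamma \rightarrow \Delta , \{ \ast(\varphi_1,\ldots,\varphi_\ell) \} \times S$, the conclusion of~(\ref{eq: table rule s}). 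Replacing each application of~(\ref{eq: table rule s}) in the cut-free $\nmvlr$-derivation by this block produces the desired cut/resolution-free $\nmvlrsd$-derivation.
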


\subsubsection{The sequent dual of $\nmvlrdd$}
\label{sas: nmvlrddsd}

In the case of $\nmvlrdd$,
the (dual) rules for introduction of $\ast$ to the antecedent are
\begin{equation}
\label{eq: table rule dual a}
\frac
{\textstyle
\Gamma ,(\varphi_j , k_j) \rightarrow \Delta
, \
(\varphi_j , k_j) \in \Lambda}
{\textstyle
\Gamma , (\ast(\varphi_1,\ldots,\varphi_\ell) , k) \rightarrow \Delta}
\end{equation}
for each
\vspace{-1.2 em}

\[
\Lambda =
(\varphi_{j_1} , k_{j_1,1}) ,\ldots, (\varphi_{j_{\sk}} , k_{j_{\sk},\sk})
\in \bigvee (\ast(\varphi_1,\ldots,\varphi_\ell),k)^{-1}
\]

The proofs of Propositions~\ref{p: dual antecedent}
and~\ref{p: dual antecedent dual axiom}
and
Corollary~\ref{c: r rddsd equivalence} below are like in
the corresponding proofs of~\cite[Proposition~5.17]{KaminskiF21},
\cite[Proposition~5.18]{KaminskiF21})
and~\cite[Corollary 5.19]{KaminskiF21}.

\begin{proposition}
\label{p: dual antecedent}
Rules~{(\ref{eq: table rule dual a})} are derivable in $\nmvlrdd$.
\end{proposition}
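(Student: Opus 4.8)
The plan is to derive each instance of rule~(\ref{eq: table rule dual a}) inside $\nmvlrdd$ by taking the corresponding dual axiom~(\ref{eq: dual axiom}), weakening it into the ambient context, and then cutting its succeedent formulas against the premises of the rule one at a time. Throughout, write $\Lambda = \{ (\varphi_{j_1} , k_{j_1,1}) ,\ldots, (\varphi_{j_{\sk}} , k_{j_{\sk},\sk}) \} \in \bigvee (\ast(\varphi_1,\ldots,\varphi_\ell) \times k)^{-1}$ for the set indexing the premises $\Gamma , (\varphi_{j_q} , k_{j_q,q}) \rightarrow \Delta$.

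First I would make the dual axiom
\[
(\ast(\varphi_1,\ldots,\varphi_\ell) , k) \rightarrow
(\varphi_{j_1} , k_{j_1,1}) ,\ldots, (\varphi_{j_{\sk}} , k_{j_{\sk},\sk})
\]
available in $\nmvlrdd$. By Proposition~\ref{p: dual axiom} it is derivable in $\nmvla$, and since $\nmvla$ and $\nmvlrdd$ are both sound and complete with respect to $\models$ (Theorems~\ref{t: c and s of nmvla} and~\ref{t: c and s of nmvlrdd}), they are deductively equivalent; hence this sequent is derivable in $\nmvlrdd$ as well. By L-weakenings with the labelled formulas of $\Gamma$ and R-weakenings with those of $\Delta$, I would inflate it to
\[
\Gamma , (\ast(\varphi_1,\ldots,\varphi_\ell) , k) \rightarrow
\Delta , (\varphi_{j_1} , k_{j_1,1}) ,\ldots, (\varphi_{j_{\sk}} , k_{j_{\sk},\sk}) .
\]

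The second step is to peel the formulas of $\Lambda$ off the succeedent by successive cuts~(\ref{eq: cut}), one cut per element of $\Lambda$. To cut out a given $(\varphi_{j_q} , k_{j_q,q})$, the left premise is the sequent accumulated so far, whose succeedent consists of $\Delta$ together with the formulas of $\Lambda$ not yet removed (among them $(\varphi_{j_q} , k_{j_q,q})$), and the right premise is the premise $\Gamma , (\varphi_{j_q} , k_{j_q,q}) \rightarrow \Delta$ of the rule, weakened by one L-weakening with $(\ast(\varphi_1,\ldots,\varphi_\ell) , k)$ and by R-weakenings with the remaining formulas of $\Lambda$, so that both premises share the antecedent $\Gamma , (\ast(\varphi_1,\ldots,\varphi_\ell) , k)$ and the same succeedent. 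After all elements of $\Lambda$ have been cut, the succeedent collapses to $\Delta$, leaving $\Gamma , (\ast(\varphi_1,\ldots,\varphi_\ell) , k) \rightarrow \Delta$, the conclusion of~(\ref{eq: table rule dual a}).

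I expect the only delicate point to be the cut bookkeeping: rule~(\ref{eq: cut}) forces its two premises to have identical antecedents and succeedents, so before each cut both the accumulated sequent and the relevant premise must be weakened to a common context, and since $\Lambda$ is a set one cuts exactly once per distinct labelled formula it contains. It is also worth confirming that importing the dual axiom through deductive equivalence is not circular: the completeness theorems for $\nmvla$ and $\nmvlrdd$ are proved without reference to the antecedent rules~(\ref{eq: table rule dual a}), so no vicious cycle arises. Should a fully syntactic, equivalence-free argument be preferred, one could instead aim to derive the dual axiom directly in $\nmvlrdd$ from rule~(\ref{eq: table rule sp}) and the shifts, relying on the fact that $(\ast(\varphi_1,\ldots,\varphi_\ell) \times k)^{-1}$ is the union of the exact-preimages $(\ast(\varphi_1,\ldots,\varphi_\ell) \times K)^{-1}$ taken over all $K \ni k$.
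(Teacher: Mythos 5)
Your proof is correct, and its second half---weakening the dual axiom~(\ref{eq: dual axiom}) into the context $\Gamma$, $\Delta$ and then removing the elements of $\Lambda$ by successive cuts against the rule's premises, with exactly the weakening bookkeeping you describe---is the derivation the paper has in mind. Note, however, that the paper offers no proof of Proposition~\ref{p: dual antecedent} at all: it defers to the deterministic counterpart, \cite[Proposition~5.17]{KaminskiF21}, i.e.\ to a syntactic derivation carried out inside the calculus. Where you genuinely diverge is in how the dual axiom is made available in $\nmvlrdd$: you import it semantically, combining soundness of $\nmvla$ (Theorem~\ref{t: c and s of nmvla}) with completeness of $\nmvlrdd$ (Theorem~\ref{t: c and s of nmvlrdd}). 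This is legitimate and, as you observe, non-circular, since Theorem~\ref{t: c and s of nmvlrdd} precedes and is independent of rules~(\ref{eq: table rule dual a}); but it makes a finitary, purely combinatorial fact rest on model-theoretic completeness (Zorn's lemma), and within this note Theorem~\ref{t: c and s of nmvlrdd} is itself only quoted from~\cite{KaminskiF21}, so the real content of your argument is outsourced to an unproved-here theorem. A fully syntactic import is cheap and parallels the ``only if'' half of Proposition~\ref{p: equivalence}: the table axioms~(\ref{eq: table axiom}), written as $\Theta \rightarrow \{ \ast(\varphi_1,\ldots,\varphi_\ell) \} \times K$ with $\Theta \in (\ast(\varphi_1,\ldots,\varphi_\ell) \times K)^{-1}$ as in Remark~\ref{r: notation}, are derivable in $\nmvlrdd$, because each premise $\Theta \rightarrow \Lambda^\prime$, $\Lambda^\prime \in \bigvee (\ast(\varphi_1,\ldots,\varphi_\ell),K)^{-1}$, of the instance of rule~(\ref{eq: table rule sp}) concluding the table axiom follows from an axiom~(\ref{eq: ordinary axiom}) by weakenings ($\Theta$ itself belongs to $(\ast(\varphi_1,\ldots,\varphi_\ell) \times K)^{-1}$, so $\Theta \cap \Lambda^\prime \neq \emptyset$); consequently every $\nmvla$-derivation, in particular the derivation of the dual axiom supplied by Proposition~\ref{p: dual axiom}, can be replayed verbatim in $\nmvlrdd$, and your cut argument then finishes the proof with no appeal to semantics. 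This same observation is what discharges the step left open in your closing alternative, namely how the premises of~(\ref{eq: table rule sp}) are to be obtained there.
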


\begin{proposition}
\label{p: dual antecedent dual axiom}
Axioms~{(\ref{eq: dual axiom})} are derivable
from~{\em (\ref{eq: table rule dual a})}.
\end{proposition}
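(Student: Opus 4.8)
The plan is to derive every instance of the dual axiom~(\ref{eq: dual axiom}) by a single application of rule~(\ref{eq: table rule dual a}), whose premises will be immediate consequences of the ordinary axioms~(\ref{eq: ordinary axiom}). First I would fix $k$ together with an element
\[
\Lambda
=
\{ (\varphi_{j_1} , k_{j_1,1}) ,\ldots, (\varphi_{j_{\sk}} , k_{j_{\sk},\sk}) \}
\in
\bigvee (\ast(\varphi_1,\ldots,\varphi_\ell) , k)^{-1} ,
\]
so that the corresponding instance of~(\ref{eq: dual axiom}) is precisely the sequent $(\ast(\varphi_1,\ldots,\varphi_\ell) , k) \rightarrow \Lambda$. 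The guiding observation is that this sequent is the conclusion of rule~(\ref{eq: table rule dual a}) once the latter is instantiated with $\Gamma = \emptyset$, $\Delta = \Lambda$, and with this very same $\Lambda$ as its parameter.

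It then remains to supply the premises of that rule instance. For each labelled formula $(\varphi_j , k_j) \in \Lambda$, the required premise is $(\varphi_j , k_j) \rightarrow \Lambda$; since $(\varphi_j , k_j)$ already occurs in the succeedent $\Lambda$, this follows from the ordinary axiom $(\varphi_j , k_j) \rightarrow (\varphi_j , k_j)$ by adjoining the remaining members of $\Lambda$ to the succeedent via R-weakenings~(\ref{eq: rw}). Collecting these premises and applying~(\ref{eq: table rule dual a}) once yields $(\ast(\varphi_1,\ldots,\varphi_\ell) , k) \rightarrow \Lambda$, that is, the axiom~(\ref{eq: dual axiom}).

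I do not expect any real obstacle. The single point requiring care is that the parameter $\Lambda$ chosen for the rule must coincide with the succeedent of the target axiom; with that choice every premise collapses to an identity axiom modulo weakening, and the derivation is immediate. This is the exact analogue of the derivation behind Proposition~\ref{p: antecedent dual axiom}, following \cite[Proposition~5.18]{KaminskiF21}.
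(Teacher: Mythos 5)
Your derivation is correct and is essentially the proof the paper intends (it omits the details, deferring to \cite[Proposition~5.18]{KaminskiF21}): instantiate rule~(\ref{eq: table rule dual a}) with $\Gamma = \emptyset$, $\Delta = \Lambda$, and the same $\Lambda$ as parameter, each premise $(\varphi_j , k_j) \rightarrow \Lambda$ being an identity axiom~(\ref{eq: ordinary axiom}) padded by R-weakenings~(\ref{eq: rw}). No gap here.
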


The sequent calculus $\nmvlrddsd$ results from $\nmvlrdd$ by replacing
rules of introduction to succeedent~(\ref{eq: table rule sp})
with rules of
introduction to antecedent~(\ref{eq: table rule dual a}) and
adding to it multi-shifts (\ref{eq: jls}).

\begin{corollary}
\label{c: r rddsd equivalence}
Calculi $\nmvlrddsd$ and $\nmvlrdd$ are deductively equivalent.
\end{corollary}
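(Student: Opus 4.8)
The plan is to prove the two directions of derivability separately, assembling the two preceding propositions exactly as Corollary~\ref{c: r rsd equivalence} is assembled from Propositions~\ref{p: antecedent} and~\ref{p: antecedent dual axiom}. Both $\nmvlrdd$ and $\nmvlrddsd$ contain the ordinary axioms~(\ref{eq: ordinary axiom}) and the structural rules~(\ref{eq: ls})--(\ref{eq: resolution}), and they differ only in their logical apparatus: $\nmvlrdd$ carries the succeedent rules~(\ref{eq: table rule sp}), whereas $\nmvlrddsd$ carries the antecedent rules~(\ref{eq: table rule dual a}) together with the multi-shift~(\ref{eq: jls}). Consequently it is enough to show that each rule peculiar to one calculus is derivable in the other; replacing every application of such a rule by its derivation then converts a derivation from $\bsigma$ in one system into a derivation from $\bsigma$ in the other, which is exactly the biconditional $\bsigma \vdash_{\sbnmvlrddsd} \Sigma \Longleftrightarrow \bsigma \vdash_{\sbnmvlrdd} \Sigma$.

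For the implication from $\nmvlrddsd$ to $\nmvlrdd$ I would verify that both rules peculiar to $\nmvlrddsd$ are derivable in $\nmvlrdd$. The antecedent rules~(\ref{eq: table rule dual a}) are derivable in $\nmvlrdd$ by Proposition~\ref{p: dual antecedent}. The multi-shift~(\ref{eq: jls}) is derivable in $\nmvlrdd$ by Remark~\ref{r: m shift}, whose derivation uses only the premises of the rule together with the sequent~(\ref{eq: all phi}); and~(\ref{eq: all phi}) is available in $\nmvlrdd$ since, by Remark~\ref{r: all calculi}, the proof of Proposition~\ref{p: all phi} invokes only axioms~(\ref{eq: ordinary axiom}) and the structural rules. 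Hence every $\nmvlrddsd$-derivation is simulated in $\nmvlrdd$.

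For the converse implication, the only rule of $\nmvlrdd$ absent from $\nmvlrddsd$ is the succeedent rule~(\ref{eq: table rule sp}), so it suffices to make that rule available in $\nmvlrddsd$. Here I would use Proposition~\ref{p: dual antecedent dual axiom}: the distributively dual axioms~(\ref{eq: dual axiom}) are derivable from the antecedent rules~(\ref{eq: table rule dual a}), which belong to $\nmvlrddsd$, so those axioms are derivable in $\nmvlrddsd$. Since $\nmvladd$ consists precisely of the ordinary axioms~(\ref{eq: ordinary axiom}), the dual axioms~(\ref{eq: dual axiom}), and the structural rules, it follows that everything derivable in $\nmvladd$ is derivable in $\nmvlrddsd$. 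Finally, Theorems~\ref{t: c and s of nmvladd} and~\ref{t: c and s of nmvlrdd} show that $\nmvladd$ and $\nmvlrdd$ are both sound and complete for $\models$, hence deductively equivalent; chaining $\bsigma \vdash_{\sbnmvlrdd} \Sigma \Rightarrow \bsigma \vdash_{\sbnmvladd} \Sigma \Rightarrow \bsigma \vdash_{\sbnmvlrddsd} \Sigma$ completes this direction.

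I expect the delicate point to be the converse direction, specifically the passage from ``the dual axioms~(\ref{eq: dual axiom}) are derivable in $\nmvlrddsd$'' to ``the succeedent rule~(\ref{eq: table rule sp}) is available in $\nmvlrddsd$.'' Routing this through the semantic equivalence of $\nmvladd$ and $\nmvlrdd$ sidesteps any explicit construction, but a purely syntactic alternative would have to recover~(\ref{eq: table rule sp}) from the dual axioms by cuts, lining up the premises $\Gamma \rightarrow \Delta, \Lambda$ indexed by $\Lambda \in \bigvee(\ast(\varphi_1,\ldots,\varphi_\ell),K)^{-1}$ with the succeedents of the axioms~(\ref{eq: dual axiom}) so that exactly the block $\{\ast(\varphi_1,\ldots,\varphi_\ell)\} \times K$ survives. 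That cut-scheduling, in which the distributive bookkeeping of $\bigvee(\ast \times K)^{-1}$ genuinely enters, is the only combinatorially nontrivial ingredient; every other step is a routine appeal to the cited propositions, remarks, and theorems.
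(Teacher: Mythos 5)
Your proof is correct and follows essentially the argument the paper intends: the paper defers this corollary to the corresponding proof in \cite{KaminskiF21}, which assembles exactly your ingredients~-- Proposition~\ref{p: dual antecedent} (plus Remark~\ref{r: m shift} and Proposition~\ref{p: all phi} for the multi-shift~(\ref{eq: jls})) for one direction, and Proposition~\ref{p: dual antecedent dual axiom} together with the detour through $\nmvladd$ via Theorems~\ref{t: c and s of nmvladd} and~\ref{t: c and s of nmvlrdd} for the converse. No gaps; the semantic routing of the second direction is also how the cited proof handles it, so no explicit cut-scheduling for~(\ref{eq: table rule sp}) is required.
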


\begin{theorem}
\label{t: nmvlrddsd c-c elimination}
{(Cf.~\cite[Theorem~5.20]{KaminskiF21}.)}
Each $\nmvlrddsd$-derivable sequent is
derivable without cut or resolution.
\end{theorem}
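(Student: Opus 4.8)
The plan is to run the same double induction as in Theorem~\ref{t: nmvlr c-c elimination} and its deterministic source \cite[Theorem~5.20]{KaminskiF21}, the outer induction on the length of the derivation and the inner induction on the complexity of the principal formula, showing that the first cut or resolution in a derivation can always be removed. As in the proof of Theorem~\ref{t: nmvlr c-c elimination}, the outer induction together with the basis of the inner induction involves only the structural rules~(\ref{eq: ls})--(\ref{eq: resolution}) and the axioms, never the logical rule~(\ref{eq: table rule dual a}); hence these parts coincide verbatim with the corresponding parts of \cite[Theorem~5.20]{KaminskiF21} and may be quoted without change. So I would concentrate on the inner induction step, in which the principal formula is compound, say $\ast(\varphi_1,\ldots,\varphi_\ell)$.

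In $\nmvlrddsd$ the connective $\ast$ is introduced by the single rule~(\ref{eq: table rule dual a}), and that rule puts $\ast(\varphi_1,\ldots,\varphi_\ell)$ into the \emph{antecedent}. Consequently, the relevant configuration is a $k$-cut~(\ref{eq: cut}) on $(\ast(\varphi_1,\ldots,\varphi_\ell),k)$ whose right premise is the conclusion of an instance of~(\ref{eq: table rule dual a}) for some $\Lambda\in\bigvee(\ast(\varphi_1,\ldots,\varphi_\ell),k)^{-1}$, while the left premise has $(\ast(\varphi_1,\ldots,\varphi_\ell),k)$ in the succeedent. First I would trace this succeedent occurrence upward; since $\ast$ never enters the succeedent through a logical rule, it must have been placed there by a shift~(\ref{eq: ls}) or, more generally, by a $K$-L-multi-shift~(\ref{eq: jls}), whose premises again carry $\ast(\varphi_1,\ldots,\varphi_\ell)$ in the antecedent under labels matching one of the sets $\Theta_q$ of~(\ref{eq: enumeration})--(\ref{eq: gamma q}) introduced by~(\ref{eq: table rule dual a}). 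Thus both occurrences of the principal formula are ultimately logical, the succeedent one being assembled from antecedent ones by~(\ref{eq: jls}). Using the label $k$ to line up the selected $\Theta_q$ with the formulas of $\Lambda$, I would permute the cut above~(\ref{eq: table rule dual a}) and the shift and replace the cut on $\ast(\varphi_1,\ldots,\varphi_\ell)$ by cuts on the immediate subformulas $\varphi_j$, each of strictly smaller complexity, so that the inner induction hypothesis disposes of them; the multi-shift~(\ref{eq: jls}) is exactly what lets the reassembled derivation remain cut/resolution-free while keeping the subformula property, cf.~Remark~\ref{r: m shift}.

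The step I expect to be the real obstacle---and the only place where non-determinism plays a role---is the degenerate configuration dual to the equality case $(k_1^\prime,\ldots,k_\ell^\prime)=(k_1^{\prime\prime},\ldots,k_\ell^{\prime\prime})$ isolated in the proof of Theorem~\ref{t: nmvlr c-c elimination}. Because $\ast$ is non-deterministic, a single instance of~(\ref{eq: table rule dual a}) may draw on several sets $\Theta_q\in\bigvee(\ast(\varphi_1,\ldots,\varphi_\ell),k)^{-1}$ that agree on the coordinate one wishes to cut, so the intended subformula cuts would be vacuous and no reduction in complexity is obtained. As in the primal argument, the remedy is not to reduce but to observe that in this case the conclusion of the problematic inference already occurs among its premises; the cut is then simply deleted and the derivation continued from that premise, possibly followed by weakenings and shifts. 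Checking that coincidence of the label data always collapses the rule instance to one of its premises is the single verification that has to be carried out, and it is the precise antecedent-side counterpart of the footnoted remark accompanying the proof of Theorem~\ref{t: nmvlr c-c elimination}.
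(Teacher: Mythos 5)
The first thing to note is that the paper itself gives no proof of Theorem~\ref{t: nmvlrddsd c-c elimination}: it is one of the appendix results whose proofs are declared ``very similar to their counterparts in~\cite{KaminskiF21}'' and omitted, with the footnote to the proof of Theorem~\ref{t: nmvlr c-c elimination} recording the only modification the non-deterministic setting is supposed to require. Your skeleton --- double induction, structural cases quoted from~\cite{KaminskiF21}, attention concentrated on the inner induction step for a compound principal formula introduced by~(\ref{eq: table rule dual a}) and cut against a succeedent occurrence created by~(\ref{eq: ls}) or~(\ref{eq: jls}) --- is consistent with that intended proof. The gap is in the step you yourself call ``the single verification that has to be carried out'': your degenerate case is a transfer of the primal situation that does not exist on the antecedent side, and the proposed remedy cannot work as stated.

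In the proof of Theorem~\ref{t: nmvlr c-c elimination} the degeneracy is possible because rule~(\ref{eq: table rule s}) introduces the whole set $\{ (\ast(\varphi_1,\ldots,\varphi_\ell),k) : v_k \in \ast(v_{k_1},\ldots,v_{k_\ell}) \}$ into the succeedent and because \emph{resolution} needs two distinct labels $k^\prime \neq k^{\prime\prime}$ on the same formula; when the two input tuples coincide, the resolution's conclusion literally equals each of its premises (sequents being sets), so the inference is deleted. In $\nmvlrddsd$, rule~(\ref{eq: table rule dual a}) introduces a \emph{single} labelled formula $(\ast(\varphi_1,\ldots,\varphi_\ell),k)$ into the antecedent, and the problematic inference is a \emph{cut}: its conclusion $\Gamma \rightarrow \Delta$ differs from each premise exactly by the cut formula, so ``the conclusion already occurs among the premises'' is impossible here. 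Likewise, ``vacuous subformula cuts'' are not a threat: the antecedent-side reduction replaces the cut on $\ast(\varphi_1,\ldots,\varphi_\ell)$ by \emph{cuts} (not resolutions) on the $\varphi_j$, and a cut does not require two distinct labels, so sets $\Theta_q$ agreeing on a coordinate cause no collapse. What the reduction actually needs, and what your sketch never supplies, is the mechanism that lines up the two sides of the cut: trace the succeedent occurrence to a $K$-L-multi-shift with $k \in \overline{K}$. If $K \cup \{k\} \neq \{1,\ldots,n\}$, the cut is absorbed outright by applying the multi-shift for $K \cup \{k\}$ to the old premises together with the right premise of the cut (after suitable weakenings) --- no subformula cut at all. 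If $K \cup \{k\} = \{1,\ldots,n\}$, one has for every $k^\prime$ a transversal $\Lambda_{k^\prime} \in \bigvee (\ast(\varphi_1,\ldots,\varphi_\ell),k^\prime)^{-1}$ whose members head derivable premises; since every label tuple $(m_1,\ldots,m_\ell)$ lies in some $(\ast(\varphi_1,\ldots,\varphi_\ell) \times k^\prime)^{-1}$ --- this uses only the non-emptiness of the truth table, which is the only point where non-determinism is felt, and it costs nothing --- the union of the $\Lambda_{k^\prime}$ must contain, for some coordinate $j$, all $n$ labels of $\varphi_j$; one multi-shift on $\varphi_j$ followed by a single cut on $\varphi_j$, of strictly smaller complexity, then closes the case by the inner induction hypothesis. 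So the non-deterministic adaptation here is not an analogue of the equality case at all, and your proposal stalls precisely where it claims to do the essential work.
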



\end{document}